\DeclareMathSymbol{\shortminus}{\mathbin}{AMSa}{"39}
\newtheorem{proposition}{Proposition}[section]
\newtheorem{theorem}{Theorem}[section]
\newtheorem{lemma}{Lemma}[section]
\newtheorem{corollary}{Corollary}[section]
\newcommand{\R}{\mathbb{R}}
\newcommand{\E}{\mathbb{E}}
\newcommand{\PP}{\mathbb{P}}
\newcommand{\argmax}{\text{argmax}}
\newenvironment{keyword}{
  \vspace{1em}
  \noindent\textbf{Keywords:}
}{\vspace{1em}}
\title{\bf Tackling estimation risk in Kelly investing \\ using options} 
\author[1,3]{Fabrizio~Lillo}
\author[2]{Piero Mazzarisi}
\author[3]{Ioanna-Yvonni Tsaknaki}
\affil[1]{\footnotesize Department of Mathematics, University of Bologna, Piazza di Porta San Donato 5, Bologna 40126, Italy.}
\affil[2]{\footnotesize University of Siena, Banchi di Sotto 55, Siena 53100, Italy.}
\affil[3]{\footnotesize Scuola Normale Superiore, Piazza dei Cavalieri 7, Pisa 56126, Italy.}
\begin{document}
\maketitle

\begin{abstract}
\noindent The Kelly criterion provides a general framework for optimizing the growth rate of an investment portfolio over time by maximizing the expected logarithmic utility of wealth. However, the optimality condition of the Kelly criterion is highly sensitive to accurate estimates of the probabilities and investment payoffs. Estimation risk can lead to greatly suboptimal portfolios. In a simple binomial model, we show that the introduction of a European option in the Kelly framework can be used to construct a class of growth optimal portfolios that are robust to estimation risk.
\end{abstract}

\begin{keyword}

Kelly criterion, Log-optimal portfolios, Estimation risk.

\end{keyword}

\section{Introduction}\label{seq:intro}
The Kelly criterion, originally introduced by \cite{kelly}, is a betting strategy applied to investment management that aims to maximize the logarithmic growth of wealth over the long term. By allocating capital in proportion to the expected edge over the odds, the criterion theoretically ensures the highest possible compounded return. Its application to investment management involves calculating optimal position sizes based on expected returns and probabilities, making it particularly appealing for quantitative investors seeking long-term capital efficiency, as supported by many papers, see, e.g., \cite{Breiman_1961, Thorp1975PortfolioCA, Thorp_1997, MacLean2004, Ziemba_caia, Ziemba1} to name but a few.

Despite its theoretical elegance, the Kelly criterion faces two main criticisms: large short-term risk and high sensitivity to estimation errors, see \cite{maclean2010long, ziemba2011using, Ziemba2Samuelson}. In particular, the latter, known also as {\it estimation risk} (in turn part of the general {\it distribution model risk}, see, e.g., \cite{Cont_uncertainty_2006,Breuer_Csiszar_2016}), refers to the fact that even small inaccuracies in the estimates of key input parameters such as expected returns, volatility, and probabilities can lead to significant over- or under-investing and suboptimal outcomes. As a result, while the Kelly criterion provides a compelling framework for capital allocation, its practical implementation often necessitates conservative adjustments and careful parameter calibration. Whereas short-term market risk is typically moderated using the so-called {\it fractional} Kelly strategy, see \cite{maclean2010long}, suboptimality due to estimation risk is still an open issue.

Here, in the context of a binomial tree market, we prove that the addition of a European option to the investment assets provides optimal Kelly strategies that are robust to estimation risk. More precisely, in the absence of estimation risk, the inclusion of a derivative does not modify the growth rate of the optimal portfolio. In contrast, the two Kelly strategies, with and without options, perform differently, and neither consistently outperforms the other across all parameters misspecifications. We then demonstrate that a proper convex combination of two Kelly portfolios is robust to estimation risk in the long term. 

\section{Review of the classical Kelly strategy}\label{sec:KS}

Let us consider a market where a stock $S$ and a bond $B$ can be traded, described by a time-discrete stochastic binomial tree model with the stock price evolving as a recombining tree where at each step it can move up by a factor $u$ or down by a factor $d$, such that $0<d<u$, and the bond price follows the deterministic dynamics $B_t= B_{t-1}R$ with $R$ the rate of interest,  satisfying the no-arbitrage condition $d<R<u$. Let $X_t\sim\Phi(p)$ be an i.i.d. Bernoulli random variable describing the total return $S_t/S_{t-1}$ at time $t$, i.e. $\PP(X_t=u)=p$ and $\PP(X_t=d)=1-p$ for some $0<p<1$ such that $S_t = S_{t-1}X_t$. Given $n\in\mathbb{N}$, let $(\Omega, \{\mathcal{F}_t\}_{t=0}^n,\PP)$ be the probability space with $\Omega=\{u,d\}^n$ the sample space, $\PP$ the binomial probability measure over $\Omega$, and $\{\mathcal{F}_t\}_{t=0}^n$ the filtration where $\mathcal{F}_t$ denotes the sigma-algebra generated by the process up to time $t$.

At time $t=0$, let $S_0$ and $B_0$ be the stock and the bond prices, respectively. If $W_0$ indicates the initial wealth, then $N_0^{(s)}=\frac{W_0}{S_0}f$ and $N_0^{(b)}=\frac{W_0}{B_0}(1-f)$ represent the number of stock shares and bonds purchased or sold, respectively.\footnote{No constraints on financing (i.e. negative fraction $1-f$ for the bond) or short selling of stocks are considered here.} As a consequence, at time $t=1$, the value in stocks  is $N_0^{(s)}S_1 = N_0^{(s)}S_0X_1$, while, similarly, the value in bonds is $N_0^{(b)}B_0R$. The portfolio's wealth is then $W_1 = W_0 [fX_1+(1-f)R]$. When the fraction $f$ is constant over time, then the dynamics of the portfolio's wealth at a generic time $t$ is described as
\begin{equation}\label{eq:wealth}
W_t = W_{t-1}[fX_t+(1-f)R] = W_{t-1}\pi_f(X_t)\Rightarrow W_n=W_0\prod_{t=1}^n\pi_f(X_t),
\end{equation}
with $n$ the final time and $\pi_f(X_t):= [fX_t+(1-f)R]$ the relative payoff depending on the random variable $X_t$, that is
\begin{equation}\label{eq:KSpayofft}
{\pi}_{f}(X_t) = \begin{cases}
fu+(1-f) R &\quad\mbox{if}\quad X_t = u,\\
fd+(1-f) R &\quad\mbox{if}\quad X_t = d.
\end{cases}
\end{equation}
By the strong law of large numbers, the long-term exponential growth rate for a generic value of $f$ converges to
\begin{equation}\label{eq:growth}
G_n = \frac{1}{n}\log\frac{W_n}{W_0} = \frac{1}{n}\sum_{t=1}^n\log \pi_f(X_t)\overset{a.s.}\longrightarrow\E[\log \pi_f(X)]\quad\mbox{as }n\rightarrow\infty
\end{equation}
with $X\sim \Phi(p)$ i.i.d. Bernoulli random variable, as long as the relative payoff $ \pi_f(X)$ is positive for any possible outcome. By defining the asymptotic exponential growth rate as $G(f;\Phi) := \E[\log \pi_f(X)]$, it turns out that the Kelly criterion of maximizing $G(f;\Phi)$ is equivalent to the maximization of the log-utility of the (relative) wealth. In other words, the Kelly solution, which we name {\it Kelly Strategy} (KS) throughout the paper, identifies the log-optimal portfolio. 
A closed-form expression is obtained by solving the Karush–Kuhn–Tucker (KKT) conditions. 
That is maximizing the growth rate function $G(f;\Phi)$ subject to
\begin{equation}\label{eq: constraints KS problem}
    \pi_f(X)>0\Leftrightarrow\begin{matrix}
        fu+(1-f)R>0\\[10pt]
        fd+(1-f)R>0
    \end{matrix}\Leftrightarrow\begin{matrix}
        -f-\frac{R}{u-R}<0\\[10pt]
        f-\frac{R}{d-R}<0
    \end{matrix}
\end{equation}
We write the Lagrangian
\begin{equation*}
    \mathcal{L}(f,\lambda_1,\lambda_2) = -G(f;\Phi)+\lambda_1\Big(-f-\frac{R}{u-R}\Big)+\lambda_2\Big(f-\frac{R}{d-R}\Big)
\end{equation*}
since the constraints in Eq.(\ref{eq: constraints KS problem}) are strict we end up in $\lambda_1=\lambda_2=0$, thus the solution of the problem $f^{*}(\Phi) = \argmax_{f\in\R}G(f,\Phi)$, can be solved by using the first order condition, i.e.
\begin{align*}
    G^\prime(f;\Phi) & = 0\Leftrightarrow\\\nonumber
    \frac{p(u-R)}{R+f(u-R)}+\frac{(1-p)(d-R)}{R+f(d-R)} & = 0,
\end{align*}
resulting in
\begin{equation}\label{eq: optimal fraction stocks}
    f^*= \frac{p(R-u)R+(1-p)(R-d)R}{(u-R)(d-R)}\in\R.
\end{equation}
The asymptotic long-term growth rate is
\begin{align}\label{eq: maximum growth rate KS}
\max_{f\in\R}G(f;\Phi)&=G(f^{*};\Phi)=\E[\log \pi_{f^*}(X)] \\&= p\log(f^*(u-R)+R)+(1-p)\log(f^*(d-R)+R)\nonumber
\end{align}
with $\pi_{f^*}(X) = f^*X+(1-f^*)R$. 
The solution in Eq.(\ref{eq: optimal fraction stocks}) is for the unconstrained maximization problem. When financing through bonds, financial leverage, and short selling are not allowed in the market, $f$ should be constrained in the unit interval. The constrained problem can be solved in closed form by solving the KKT conditions.
\begin{proposition}
The {\it constrained} optimal fraction $f^*\in[0,1]$ maximizing $G(f;\Phi)$ is
\begin{equation*}\label{eq: prop in Lo et al}
       f^{*} = \begin{cases}
           \begin{matrix}
               1, & \text{if}\quad\E\big[\frac{X}{R}\big]>1\quad\text{and}\quad \E\big[\frac{R}{X}\big]<1\\[10pt]
               \frac{p(R-u)R+(1-p)(R-d)R}{(u-R)(d-R)}, & \text{if}\quad\E\big[\frac{X}{R}\big]\geq 1\quad\text{and}\quad \E\big[\frac{R}{X}\big]\geq 1\\[10pt]
               0, & \text{if}\quad\E\big[\frac{X}{R}\big]<1\quad\text{and}\quad \E\big[\frac{R}{X}\big]>1
           \end{matrix}
       \end{cases}.
\end{equation*}
\end{proposition}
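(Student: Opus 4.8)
The plan is to exploit the concavity of $G(f;\Phi)$ and reduce the constrained problem to an inspection of the sign of its derivative at the two endpoints $f=0$ and $f=1$. First I would observe that, since $\log$ is concave and $\pi_f(X)=fX+(1-f)R$ is affine in $f$, the map $f\mapsto\log\pi_f(X)$ is concave for each outcome of $X$; taking expectations preserves concavity, so $G(\cdot;\Phi)$ is concave on the feasible set $\{f:\pi_f(X)>0\}$. Moreover, for $f\in[0,1]$ the payoff is automatically positive: $\pi_f(X)$ is a convex combination of $R$ with either $u>R$ (when $X=u$) or $d\in(0,R)$ (when $X=d$), using the ordering $0<d<R<u$. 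Hence on $[0,1]$ the positivity constraints of Eq.(\ref{eq: constraints KS problem}) are inactive, the only active constraints are $f\ge 0$ and $f\le 1$, and by concavity the KKT conditions are both necessary and sufficient for a global maximum.

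Next I would compute $G'(f;\Phi)=\E\big[(X-R)/\pi_f(X)\big]$ and evaluate it at the endpoints. At $f=0$ one has $\pi_0(X)=R$, so $G'(0;\Phi)=\E[X/R]-1$; at $f=1$ one has $\pi_1(X)=X$, so $G'(1;\Phi)=1-\E[R/X]$. Thus the condition $\E[X/R]\gtrless 1$ encodes the sign of $G'$ at the lower endpoint, while $\E[R/X]\lessgtr 1$ encodes the sign of $G'$ at the upper endpoint. Because $G$ is concave, $G'$ is nonincreasing, so the location of the maximizer is dictated entirely by these two signs.

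Then I would run the three cases, formalized by writing the Lagrangian with multipliers $\mu_1,\mu_2\ge 0$ for $-f\le 0$ and $f-1\le 0$ and reading off stationarity and complementary slackness, exactly as done for the unconstrained problem. If $G'(0)\ge 0$ and $G'(1)\le 0$ (that is, $\E[X/R]\ge 1$ and $\E[R/X]\ge 1$), the nonincreasing derivative vanishes inside $[0,1]$ and the interior stationary point solving $G'(f)=0$, already given in Eq.(\ref{eq: optimal fraction stocks}), is the constrained optimum. If instead $G'(1)>0$, i.e. $\E[R/X]<1$, then monotonicity forces $G'(0)\ge G'(1)>0$ as well (so $\E[X/R]>1$ comes for free), hence $G'>0$ throughout $[0,1]$ and the corner $f^*=1$ is optimal; symmetrically, $G'(0)<0$, i.e. $\E[X/R]<1$, forces $G'(1)<0$ (so $\E[R/X]>1$) and gives $f^*=0$.

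The one point that requires care, and the main obstacle, is verifying that the three stated cases are exhaustive and mutually consistent, i.e. that they cover every admissible parameter configuration. The seemingly missing combination $\E[X/R]<1$ together with $\E[R/X]<1$ cannot occur: pointwise the arithmetic--geometric mean inequality gives $X/R+R/X\ge 2$ (valid since $X,R>0$), so $\E[X/R]+\E[R/X]\ge 2$, which forbids both terms from lying below $1$. This is exactly the inequality $G'(0)\ge G'(1)$ coming from concavity, so the impossible combination is precisely the one that would violate monotonicity of $G'$. The boundary equalities are absorbed into the interior case, the three cases therefore partition the admissible parameter space, and the proof is complete.
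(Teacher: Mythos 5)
Your proposal is correct, and it is worth noting that the paper itself does not supply an argument at all: it disposes of this proposition in one line by citing it as ``a direct application of the results in Brennan and Lo (2011).'' Your proof therefore fills in, self-containedly, exactly the kind of argument that underlies the cited result. The three pillars of your argument are all sound: (i) concavity of $G(\cdot;\Phi)$ plus automatic positivity of $\pi_f(X)$ on $[0,1]$ (convex combination of the positive numbers $R$ and $u$ or $d$), so the only relevant constraints are $0\le f\le 1$; (ii) the endpoint derivative identities $G'(0;\Phi)=\E[X/R]-1$ and $G'(1;\Phi)=1-\E[R/X]$, which translate the proposition's hypotheses into sign conditions on a nonincreasing $G'$, forcing the corner solutions in the strict cases and placing the unique stationary point (which is exactly the unconstrained formula, since the first-order condition is the same equation) inside $[0,1]$ in the middle case; and (iii) the exhaustiveness check via the pointwise AM--GM bound $X/R+R/X\ge 2$, ruling out the fourth sign combination --- a nice touch, since you also observe this is precisely the monotonicity inequality $G'(0)\ge G'(1)$, so the case structure of the proposition is not an accident but a reflection of concavity. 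What your route buys is a fully elementary, verifiable proof within the binomial setting; what the paper's citation buys is generality, since the Brennan--Lo result covers arbitrary return distributions rather than the two-point law used here, at the cost of leaving the reader to consult the reference.
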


The proof of the proposition is a direct application of the results in \cite{Brennan_Lo_2011}.


\subsection{Simulation results for KS}

\begin{figure}[h]
\centering
\includegraphics[width=1\columnwidth]{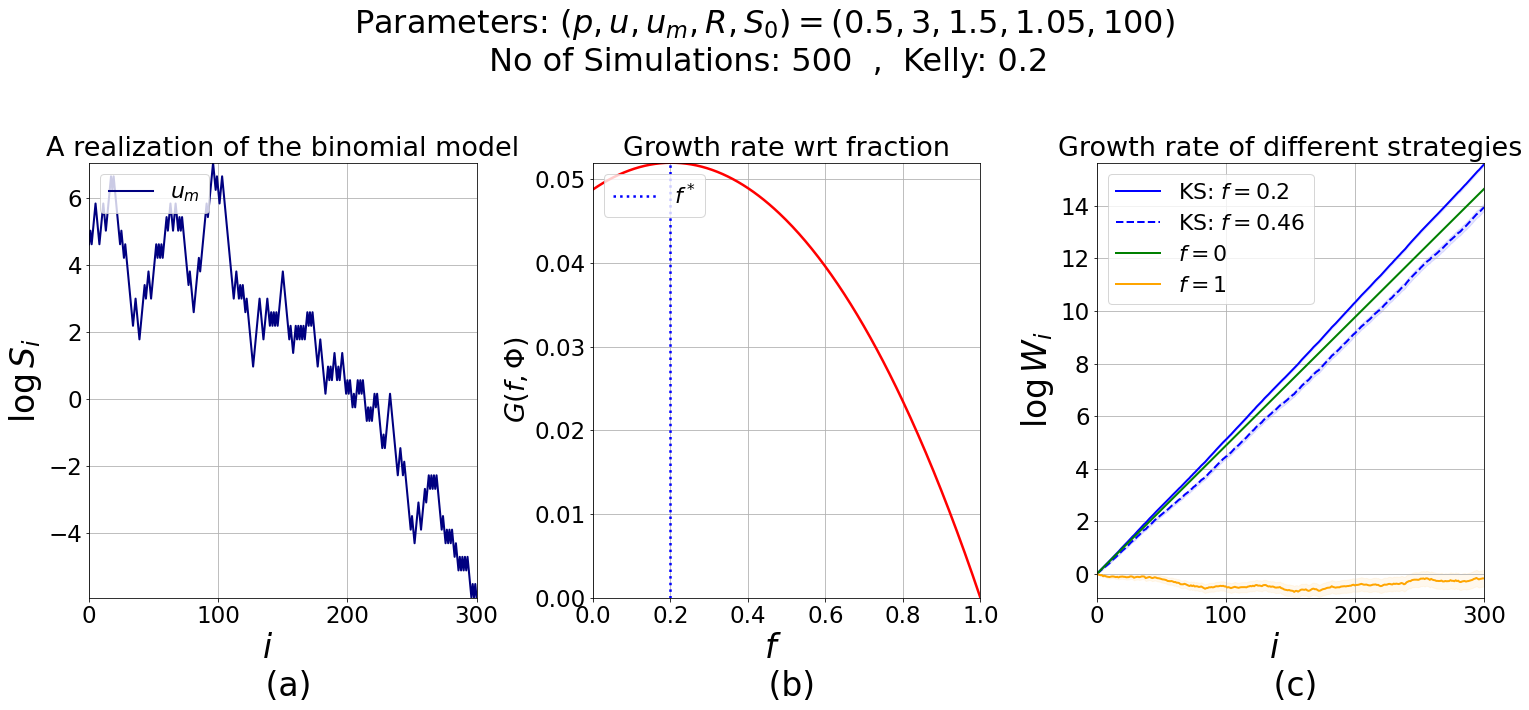}
\caption{\footnotesize \textbf{(a)} A realization of the $\log$-price, with the price evolving as in the binomial model, over 300 rounds with $u_m = 1.5$ in dark blue line. \textbf{(b)} The growth rate function (red line) as a function of the fraction $f$ for the KS strategy when $u=u_m$ and $d=1/u$. The blue dotted line indicates the optimal fraction $f^*=0.2$ (see Eq.(\ref{eq: optimal fraction stocks})). \textbf{(c)} Comparison of the strategies in terms of the log-wealth as a function of the trading rounds. The blue solid line represents the KS strategy in the well-specified scenario when $u=u_m$ and $d=1/u$. The green solid line represents the fixed-income strategy. The yellow solid line represents the strategy that invests everything to the stock. The blue dotted line represents the KS strategy in the presence of misspecification in the parameters, i.e. when $u=3$ and $u_m=1.5$.}
\label{fig: Kelly and stocks}
\end{figure}
We show here simulations of KS to verify numerically that Kelly's solution is associated with the highest growth rate compared to other constant fraction strategies. We set $u_\text{m}=1.5$, $d_\text{m}=1/u_\text{m}$, $p=1/2$, and the interest rate for the bond equal to $R=1.05$.

The dark blue line in Figure \ref{fig: Kelly and stocks}(a) -  shows one of the 500 random walks of the stock price over $n=300$ time steps.   
For KS, given the choice of the parameters, the optimal fraction is $f^*=0.2$. Figure \ref{fig: Kelly and stocks} (b) shows the growth rate  $G(f;\Phi)$ as a function of $f$, indicating also the optimal $f^*$ with a dotted blue line. Finally, in Figure \ref{fig: Kelly and stocks} (c), we compare the performance of Kelly's solution with that of three constant fraction alternative strategies. The performance is the log-wealth as a function of time. The three strategies are: (i) investing entirely in bonds ($f=0$ - green line) - the fixed-income strategy, (ii) investing entirely in stocks ($f=1$ - orange line), and (iii) a sub-optimal KS for $f=0.46$. We perform $N=500$ simulations, averaging the growth rate at each round for each strategy, and include standard errors as shaded regions. In all cases, KS consistently outperforms the alternatives. However, if the market parameters are ill-specified in determining $f^*$, a phenomenon known as {\it estimation risk}, the growth rate of the portfolio could be strongly suboptimal as we show below.

\section{Kelly criterion with options}\label{seq:kelly}

We consider now an extended Kelly framework where part of the wealth can be invested in European options. In particular, we include the possibility of trading one-period European put options with a strike price within the range between the stock’s lower and upper possible values. 
The problem refers to finding constant fractions $f$, $g$, and $1-f-g$ of the wealth to be invested in stock, option, and bond, respectively. We do not consider here any constraints on financing (i.e. negative fraction $1-f-g$ for the bond) or short selling of stocks/options. The constrained version of the problem is a simple extension of the results shown here.

Let $S_0$, $B_0$, $K_0$, and $P_0$ be the price of the stock, the bond, the strike, and the option\footnote{See, for example, \cite{Shreve_book_Binomial} for the derivation of the classical option pricing formula for a binomial model of a put option with strike price $K_0$ and payoff $(K_0-S_1)^{+}=\max\{K_0-S_1,0\}$ at maturity $t=1$, that is
$$
P_0 = \frac{1}{R}\frac{u-R}{u-d}(K_0-dS_0).
$$} at time $t=0$, respectively. If $W_0$ indicates the initial wealth, then $N_0^{(s)}=\frac{W_0}{S_0}f$, $N_0^{(b)}=\frac{W_0}{B_0}(1-f-g)$, and $N_0^{(o)}=\frac{W_0}{P_0}g$ represent the number of stock shares, bonds, and put options purchased or sold, respectively. As a consequence, at time $t=1$, the value in stocks is $N_0^{(s)}S_1 = N_0^{(s)}S_0X_1$, the value in bonds is $N_0^{(b)}B_0R$, while the payoff of the put options is $N_0^{(o)}(K_0-S_0X_1)^+$. As such, the wealth after one period is
\begin{align}
    {W}_1 & = N_0^{(o)}(K_0-S_1)^{+}+N_0^{(s)}S_1+N_0^{(b)}B_0R\nonumber\\
    & = W_0\Big[\frac{g}{P_0}(K_0-S_0X_1)^{+}+f(X_1-R)+(1-g)R\Big]\nonumber\\
    & = W_0\Big\{g\Big[\frac{(K_0-S_0X_1)^+}{P_0}+(X_1-R)\frac{S_0}{P_0}\Big]+(1-g)R+c(X_1-R)\Big\}\label{eq: wealth KO justification}
\end{align}
We divide by $W_0$ and set $c := f-(S_0/P_0)g$ to get the portfolio wealth's relative payoff:
\begin{equation}\label{eq:KOpayoff}
\frac{W_1}{W_0}= \begin{cases}
g\left[u\frac{S_0}{P_0}-\frac{S_0}{P_0}R\right]+(1-g)R+c(u-R) &\quad\mbox{if}\quad X_1 = u\\[10pt]
g\left[\frac{K_0}{P_0}-\frac{S_0}{P_0}R\right]+(1-g)R+c(d-R) &\quad\mbox{if}\quad X_1 = d
\end{cases}
\end{equation}
 Defining $N^{(s)}_{res} = W_0/S_0$ as the maximum number of stocks that can be purchased within the restricted scenario of no leverage and no short selling (i.e. when $f\in[0,1]$), one can rewrite $c=(N_0^{(s)}-N_0^{(o)})/N^{(s)}_{res}$ and interpret it as a hedging strategy parameter determining how many put options are used to cover the stock position. In the following, we parametrize the problem in terms of $(g,c)$ instead of portfolio weights $(f,g)$. For example, $g=0$ and $c=f$ is the classical Kelly Strategy (KS) obtained without using options, while 
 $c=0$ means that the portfolio is composed of the same number of stock shares and put options. As a final remark, we notice that the relative payoff in Eq.~(\ref{eq:KOpayoff}) depends on the strike price $K_0$.
 

 Moving into the multi-period setting, in order to recover the standard Kelly framework for the multi-period binomial tree, namely a ``sequential betting'' characterized by constant fractions and payoff  odds over time, we need  to impose 
 \begin{equation}\label{eq: condition of stationarity}
     \frac{K_t}{P_t}=\frac{K_0}{P_0}~~~\text{for any }t.
 \end{equation}
Conditioning on $\mathcal{F}_t$, or, in other words, given the number $m$ of up price movements until time $t$, the previous condition corresponds, after some algebra, to 
 \begin{equation*}
     K_t = K_0 d^{t-m}u^m\Leftrightarrow \log K_t = \log K_0+(2m-t)\log u
 \end{equation*}
 for any $t$.
Using the condition in Eq.~(\ref{eq: condition of stationarity}), it is easy to show that the ratio $\frac{S_t}{K_t}=\frac{S_0}{K_0}$ is constant over time (see Figure \ref{fig: tikz-strike} for a pictorial illustration of the resulting Kelly strategy), and $\frac{S_t}{P_t} = \frac{S_0}{P_0}$ for any $t$ as a consequence.
\begin{figure}[h!]
    \begin{center}
\begin{tikzpicture}[scale=0.52]
    [
        dot/.style={circle,draw=black, fill,inner sep=1pt},
    ]
\node[black,circle,fill,inner sep=1.1pt] at (4,0){}; 
\node[black,circle,fill,inner sep=1.1pt] at (5,1){};
\node[black,circle,fill,inner sep=1.1pt] at (3,1){};
\node[black,circle,fill,inner sep=1.1pt] at (6,2){};
\node[black,circle,fill,inner sep=1.1pt] at (4,2){};
\node[black,circle,fill,inner sep=1.1pt] at (2,2){};
\node[black,circle,fill,inner sep=1.1pt] at (7,3){};
\node[black,circle,fill,inner sep=1.1pt] at (3,3){};
\node[black,circle,fill,inner sep=1.1pt] at (5,3){};
\node[black,circle,fill,inner sep=1.1pt] at (1,3){};
\node[black,circle,fill,inner sep=1.1pt] at (8,4){};
\node[black,circle,fill,inner sep=1.1pt] at (6,4){};
\node[black,circle,fill,inner sep=1.1pt] at (4,4){};
\node[black,circle,fill,inner sep=1.1pt] at (2,4){};
\node[black,circle,fill,inner sep=1.1pt] at (0,4){};
\node[blue,circle,fill,inner sep=1.1pt] at (3.4,1){};
\node[blue,circle,fill,inner sep=1.1pt] at (2.4,2){};
\node[blue,circle,fill,inner sep=1.1pt] at (4.4,2){};
\node[blue,circle,fill,inner sep=1.1pt] at (3.4,3){};
\node[blue,circle,fill,inner sep=1.1pt] at (1.4,3){};
\node[blue,circle,fill,inner sep=1.1pt] at (5.4,3){};
\node[blue,circle,fill,inner sep=1.1pt] at (0.4,4){};
\node[blue,circle,fill,inner sep=1.1pt] at (2.4,4){};
\node[blue,circle,fill,inner sep=1.1pt] at (4.4,4){};
\node[blue,circle,fill,inner sep=1.1pt] at (6.4,4){};

\draw[-,gray,dashed] (4,0) -- (8,4);
\draw[-,gray,dashed] (4,0) -- (0,4);
\draw[-,gray,dashed] (3,1) -- (6,4);
\draw[-,blue,dotted] (3.4,1) -- (6.4,4);
\draw[-,gray,dashed] (5,1) -- (2,4);
\draw[-,blue,dotted] (5.4,3) -- (4.4,4);
\draw[-,blue,dotted] (4.4,2) -- (2.4,4);
\draw[-,blue,dotted] (3.4,1) -- (0.4,4);
\draw[-,gray,dashed] (6,2) -- (4,4);
\draw[-,gray,dashed] (2,2) -- (4,4);
\draw[-,blue,dotted] (2.4,2) -- (4.4,4);
\draw[-,gray,dashed] (7,3) -- (6,4);
\draw[-,gray,dashed] (1,3) -- (2,4);
\draw[-,blue,dotted] (1.4,3) -- (2.4,4);

\draw[-,dotted] (-1,1) -- (9.2,1);
\draw[-,dotted] (-1,2) -- (9.2,2);
\draw[-,dotted] (-1,3) -- (9.2,3);
\draw[-,dotted] (-1,4) -- (9.2,4);

\foreach \y in {1,...,2}
    \draw (-1,\y) -- node[xshift=-2mm] {\y} (-1.1,\y); 
\draw (-1,4) -- node[xshift=-2mm] {$n$} (-1.1,4); 

\draw (4,0) -- node[below,xshift=.3mm] {$\log S_0$} (4,-.1);
\draw[blue] (3.4,0) -- node[below,xshift=-.1mm,yshift=-1.3mm] {$\log K_0$} (3.4,-1.3); 
\node[above] at (-1.4,5){$i$};
\node[above] at (8.8,0.2){$\log s$};
    
\draw[->,thin,gray,-latex] (-1,-1) -- (-1,5.5);
\draw[->,thin,gray,-latex] (-2,0) -- (9,0);
\end{tikzpicture}
\end{center}
    \caption{An illustration of the log-strike price (blue bullets) over time on the binomial tree for stock prices (black bullets).}
    \label{fig: tikz-strike}
\end{figure}
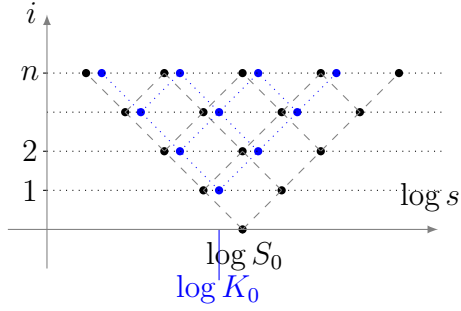
Since a Kelly strategy considers constant fractions over time, the previous condition implies also that $c_t = f-(S_t/P_t)g = c$ for any $t$. Finally, the odds in Eq.~(\ref{eq:KOpayoff}) are constant over time as well when fractions $f$ and $g$ are constant, under the previous condition on the strike price. It is
\begin{equation}\label{eq:KOpayofft}
\pi_{g,c}(X_t)\equiv\frac{W_t}{W_{t-1}}
= \begin{cases}
g\left[u\frac{S_t}{P_t}-\frac{S_t}{P_t}R\right]+(1-g)R+c(u-R) &\quad\mbox{if}\quad X_t = u\\[10pt]
g\left[\frac{K_t}{P_t}-\frac{S_t}{P_t}R\right]+(1-g)R+c(d-R) &\quad\mbox{if}\quad X_t = d
\end{cases}
\end{equation}
and the law of $\pi_{g,c}(X_t)$ is independent from $t$. The wealth at time $n$ is ${W}_n = W_0\prod_{t=1}^n W_t/W_{t-1}$, and by the strong law of large numbers, the long-term exponential growth rate converges as $n\rightarrow\infty$ to
\begin{equation}\label{eq:KOgrowth}
{G}_n(g,c;\Phi) = \frac{1}{n}\log\frac{{W}_n}{W_0} = \frac{1}{n} \sum_{t=1}^n \log {\pi_{g,c}}(X_t)\overset{a.s.}\longrightarrow \E[\log{\pi_{g,c}}(X)] \equiv G(g,c;\Phi)
\end{equation}
for positive relative payoff ${\pi_{g,c}}(X)$. The Kelly criterion prescribes to maximize ${G}(g,c;\Phi)$ and we refer to the solution as the {\it Kelly with Option} (KO) strategy. For a specific hedging strategy defined by a value of $c$, the optimal fraction invested in options is $g^* = \mbox{argmax}_{g\in\mathbb{R}}{G}(g,c;\Phi)$ s.t. ${\pi}_{g,c}(u)>0$, ${\pi}_{g,c}(d)>0$. The optimal $g^*$ is the solution of the KKT conditions obtained by maximizing the growth rate
\begin{equation*}
   {G}(g,c;\Phi) = p\log{\pi}_{g,c}(u)+(1-p)\log{\pi}_{g,c}(d)
\end{equation*}
subject to
\begin{equation}\label{eq: strict inequalities}
    \begin{matrix}
        {\pi}_{g,c}(u)>0\\[10pt]
        {\pi}_{g,c}(d)>0
    \end{matrix}\Leftrightarrow \begin{matrix}
        -c\frac{u-2R}{\tilde{u}-R}-g<0\\[10pt]
        -c\frac{d-2R}{\tilde{d}-R}-g<0
    \end{matrix}
\end{equation}
where
\begin{equation}
    \tilde{u} = u\frac{S_0}{P_0}-R\frac{S_0}{P_0}\quad\text{and}\quad\tilde{d} = \frac{K_0}{P_0}-R\frac{S_0}{P_0}.
\end{equation}
Let us define the Lagrangian as
\begin{equation*}
    \mathcal{L}(g,\lambda_1,\lambda_2) = -{G}(g,c;\Phi)+\lambda_1\Big(-c\frac{u-2R}{\tilde{u}-R}-g\Big)+\lambda_2\Big(-c\frac{d-2R}{\tilde{d}-R}-g\Big).
\end{equation*}
Due to the strict inequalities in Eq.~(\ref{eq: strict inequalities}), the Lagrange multipliers  $\lambda_1,\lambda_2$ must be zero. Hence, $g^*$ is the solution of the first-order condition 
\begin{align}
    \frac{\partial{G}}{\partial g} & = 0\Leftrightarrow\nonumber\\\label{eq: optimal solution KO}
    g^*&=\frac{p(R-\tilde{u})(R+c (d-R))+(1-p)(R-\tilde{d})(R+c(u-R))}{(\tilde{d}-R)(\tilde{u}-R)}.
    \end{align}
We also observe that $g^*$ is a linear function of $c$
\begin{equation*}
    g^* = -c\frac{u-R}{\tilde{u}-R}-p\frac{R}{\tilde{d}-R}-(1-p)\frac{R}{\tilde{u}-R}.
\end{equation*}

\begin{figure}[t]
    \centering
     \subfigure[]{\includegraphics[width=0.49\textwidth]{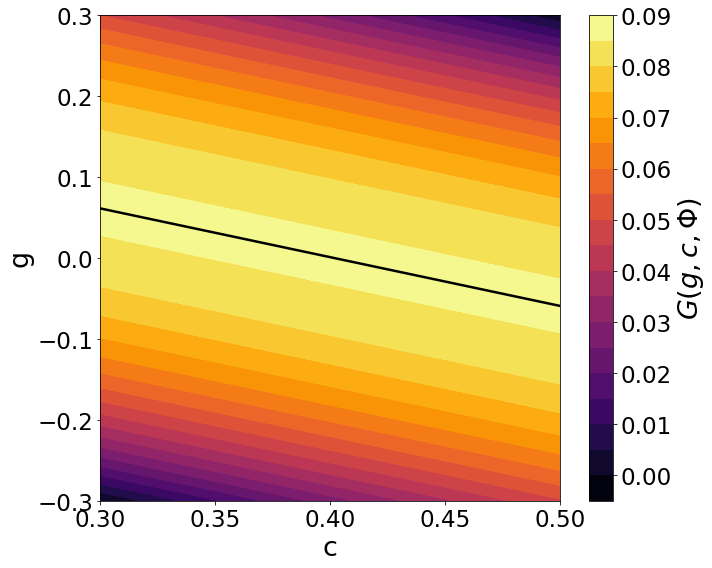}}  
      \subfigure[]{\includegraphics[width=0.49\textwidth]{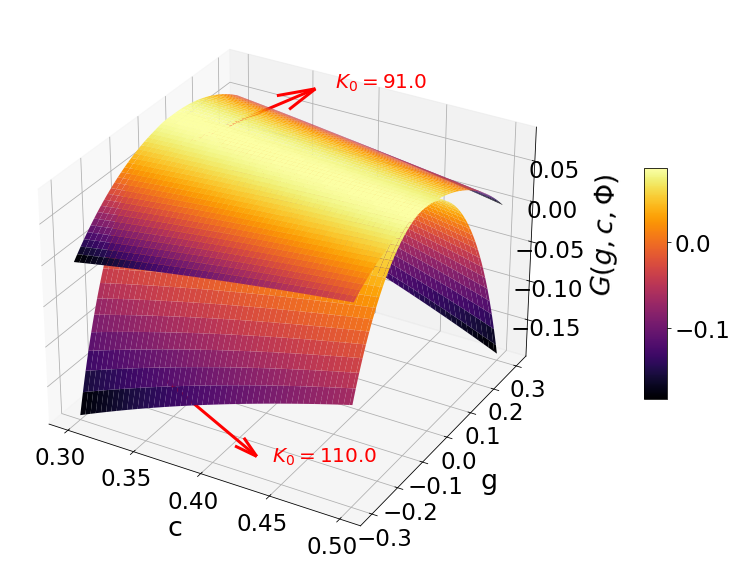}} 

    \caption{\footnotesize (a) Contour plot of the asymptotic growth rate in Eq.~(\ref{eq:KOgrowth}) for $S_0=100$, $K_0=110$, $u=2$, $d=1/u$, $p=0.5$, and $R=1.05$. The black line corresponds to the optimal solutions. (b) Growth rate surface for $K_0=91$ and $K_0=110$.}
    \label{fig:Gko}
\end{figure}

The following proposition characterizes the KO solution with respect to the classical KS in the case of no estimation risk.

\begin{proposition}\label{prop:1}
Let $(\Omega, \{\mathcal{F}_t\}_{t=0}^n,\PP)$ be the probability space associated with the binomial tree market, and $X\sim\Phi(p)$ an i.i.d. Bernoulli variable associated with the price dynamics. Let $K_0\in(dS_0,uS_0)$ then, for any $c\in\mathbb{R}$, it is
\begin{equation}\label{eq:equ}
\pi_{g^*,c}(X) = \pi_{0,f^*}(X) \quad\mbox{a.s.}
\end{equation}
where $g^*$ and $f^*$ are the optimal fractions solving the Kelly criterion for KO and KS, respectively.
\end{proposition}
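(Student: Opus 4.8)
The single-period sample space is $\{u,d\}$ with $\PP(X=u)=p\in(0,1)$, so the almost-sure equality in Eq.~(\ref{eq:equ}) is equivalent to the two scalar identities $\pi_{g^*,c}(u)=\pi_{0,f^*}(u)$ and $\pi_{g^*,c}(d)=\pi_{0,f^*}(d)$. The plan is to prove both at once by recognizing that, once the option is priced fairly, the KS and the KO problems are the \emph{same} strictly concave program over the \emph{same} one-dimensional feasible set of payoffs, so they share a unique maximizer.

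First I would introduce the risk-neutral up-probability $q:=\frac{R-d}{u-d}\in(0,1)$ (it lies in $(0,1)$ precisely because $d<R<u$) and record the budget identity it induces. A direct computation gives $q(u-R)+(1-q)(d-R)=0$, so every KS payoff satisfies $q\,\pi_{0,f}(u)+(1-q)\,\pi_{0,f}(d)=R$ for all $f\in\R$. The crucial step is to show the \emph{same} constraint holds for every KO payoff, i.e. $q\,\pi_{g,c}(u)+(1-q)\,\pi_{g,c}(d)=R$ for all $g,c\in\R$. Using Eq.~(\ref{eq:KOpayofft}) this reduces to the single relation $q\tilde u+(1-q)\tilde d=R$, which is exactly the statement that the put is a redundant, fairly priced asset; substituting $\tilde u=\frac{S_0}{P_0}(u-R)$, $\tilde d=\frac{K_0-RS_0}{P_0}$ and the pricing formula $P_0=\frac1R\frac{u-R}{u-d}(K_0-dS_0)$ collapses it to an algebraic identity. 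This is the one place where the hypothesis $K_0\in(dS_0,uS_0)$ is used: it guarantees the put pays off only in the down state, so that the quoted $P_0$ is the correct no-arbitrage price.

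With the common budget line in hand, both problems become: maximize the strictly concave map $(y_u,y_d)\mapsto p\log y_u+(1-p)\log y_d$ over the one-dimensional affine set $\{(y_u,y_d):q\,y_u+(1-q)\,y_d=R,\ y_u,y_d>0\}$, which admits a unique maximizer. It then suffices to observe that each family sweeps out this entire line: as $f$ varies, $(\pi_{0,f}(u),\pi_{0,f}(d))$ moves along $(u-R,d-R)\ne(0,0)$, and for fixed $c$, as $g$ varies, $(\pi_{g,c}(u),\pi_{g,c}(d))$ moves along $(\tilde u-R,\tilde d-R)\ne(0,0)$; both directions are tangent to the budget line by the identities above and hence parallel. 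Therefore, for every $c$, the KO optimum $\pi_{g^*,c}$ and the KS optimum $\pi_{0,f^*}$ coincide, which is Eq.~(\ref{eq:equ}).

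I expect the main obstacle to be the redundancy/pricing identity $q\tilde u+(1-q)\tilde d=R$: it is the mechanism that makes the two feasible sets coincide and, equivalently in a direct calculation, the reason the $c$-dependence cancels when $g^*$ from Eq.~(\ref{eq: optimal solution KO}) is inserted into $\pi_{g,c}$. As a cross-check I would also run the purely computational route: plug the affine form $g^*=-c\frac{u-R}{\tilde u-R}-p\frac{R}{\tilde d-R}-(1-p)\frac{R}{\tilde u-R}$ into $\pi_{g^*,c}(u)$ and $\pi_{g^*,c}(d)$, watch the $c$-terms cancel because the coefficient $\frac{u-R}{\tilde u-R}$ is exactly tuned to the cancellation, and reduce the remainder to $\pi_{0,f^*}(u)=\frac{pR(u-d)}{R-d}$ and $\pi_{0,f^*}(d)=\frac{(1-p)R(u-d)}{u-R}$ using $\frac{\tilde d-\tilde u}{\tilde d-R}=\frac{u-d}{R-d}$ (itself a restatement of the pricing identity), thereby confirming that both approaches agree.
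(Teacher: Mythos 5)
Your proof is correct, but it takes a genuinely different route from the paper's. The paper works computationally: it first proves the lemma $\frac{R-\tilde{u}}{R-u}=\frac{R-\tilde{d}}{R-d}>0$ directly from the pricing formula, and then plugs the closed-form optimizers $g^*$ and $f^*$ into the payoffs and verifies by explicit algebra that the ratio of the two payoffs equals one in each state. You instead never touch the closed forms: you show that both the KS family $\{(\pi_{0,f}(u),\pi_{0,f}(d)):f\in\R\}$ and, for each fixed $c$, the KO family $\{(\pi_{g,c}(u),\pi_{g,c}(d)):g\in\R\}$ sweep out the \emph{same} risk-neutral budget line $\{q\,y_u+(1-q)\,y_d=R,\ y_u,y_d>0\}$, via the pricing identity $q\tilde{u}+(1-q)\tilde{d}=R$, and then invoke uniqueness of the maximizer of the strictly concave objective $p\log y_u+(1-p)\log y_d$ on that common feasible set. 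The two arguments hinge on equivalent geometric facts --- the paper's lemma says the direction $(\tilde{u}-R,\tilde{d}-R)$ is proportional to $(u-R,d-R)$, while your identity says both are orthogonal to the state-price vector $(q,1-q)$ --- but yours makes the no-arbitrage intuition (stated informally after Proposition~\ref{prop:1} and in Proposition~\ref{prop: replicating KO for any c}) into the actual proof mechanism, and it generalizes immediately to any fairly priced redundant claim, whereas the paper's computation is self-contained given the explicit formulas and its lemma is reused later (in the proof of Proposition~\ref{prop: relationship between c and g}).

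One assertion you state without proof and should not skip: that $(\tilde{u}-R,\tilde{d}-R)\neq(0,0)$, which is needed for the KO family to sweep the whole budget line (if it vanished, the KO payoff would be frozen at $(R+c(u-R),R+c(d-R))$, which for generic $c$ is not the KS optimum, and the proposition would fail). Your identity $q\tilde{u}+(1-q)\tilde{d}=R$ alone does not rule this out, since $\tilde{u}=\tilde{d}=R$ satisfies it. The fix is one line, and is exactly the positivity computation in the paper's lemma: $\tilde{u}-R=\frac{(u-R)(uS_0-K_0)}{P_0(u-d)}>0$ because $K_0<uS_0$.
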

\begin{proof}
    See appendix \ref{app:proofs}.
\end{proof}

Proposition~\ref{prop:1} tells that the relative payoff of the optimal KO strategy is the same as the one of the standard KS strategy for any hedging strategy and any strike price. As a consequence, the asymptotic growth rate of the two strategies is also the same. This result is consistent with the absence of arbitrage. Moreover Proposition~\ref{prop:1} is equivalent with the following 
\begin{proposition}\label{prop: replicating KO for any c}
Let $K_0\in(dS_0,uS_0)$ and assume that the option is priced arbitrage-free, then the (unique) optimal KS replicates the optimal KO strategy for any $c\in\R$.
\end{proposition}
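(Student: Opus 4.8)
The plan is to derive this proposition directly from Proposition~\ref{prop:1}, to which it is asserted equivalent, by translating the almost-sure equality of relative payoffs into the financial language of replication. First I would fix what \emph{replication} means here: two constant-fraction self-financing strategies replicate one another if, starting from the same initial wealth $W_0$, they generate the same wealth process $\{W_t\}_{t=0}^n$ along every path. Since each period of the binomial model has exactly two outcomes $X_t\in\{u,d\}$, a one-period relative payoff $\pi(X_t)$ is entirely determined by the pair $\big(\pi(u),\pi(d)\big)$; hence two strategies whose relative payoffs agree almost surely agree in \emph{both} states, and through the multiplicative form $W_n=W_0\prod_{t=1}^n\pi(X_t)$ of Eq.~(\ref{eq:wealth}) their wealth trajectories coincide pathwise.

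With this dictionary, the statement reduces to two observations. Proposition~\ref{prop:1} already gives $\pi_{g^*,c}(X)=\pi_{0,f^*}(X)$ a.s.\ for every $c\in\mathbb{R}$; by the remark above this is precisely the assertion that the KS strategy with fraction $f^*$ and the optimal KO strategy $(g^*,c)$ produce identical wealth trajectories, i.e.\ KS replicates KO. It then remains to justify the two qualifiers. The adjective \emph{unique} refers to the KS optimizer: the map $f\mapsto G(f;\Phi)=\E[\log\pi_f(X)]$ is strictly concave on the admissible region Eq.~(\ref{eq: constraints KS problem}), because $\log$ is strictly concave and the problem is nondegenerate ($u>R>d$ and $p\in(0,1)$, so the second derivative $-p(u-R)^2/\pi_f(u)^2-(1-p)(d-R)^2/\pi_f(d)^2$ is strictly negative). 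Consequently the first-order condition has the single root $f^*$ of Eq.~(\ref{eq: optimal fraction stocks}), and this \emph{one} $f^*$ replicates the optimal KO strategy simultaneously for \emph{all} $c$.

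The role of the no-arbitrage hypothesis is to pin down the option price. I would invoke the binomial pricing formula $P_0=\tfrac{1}{R}\tfrac{u-R}{u-d}(K_0-dS_0)$ from the footnote, the unique price compatible with $d<R<u$, while $K_0\in(dS_0,uS_0)$ guarantees $P_0>0$ and that the put pays off only in the down state. This is exactly what makes the option a \emph{redundant} asset—replicable by stock and bond—so that the set of attainable relative payoffs for KO coincides with that for KS; this coincidence of attainable sets is the structural reason Proposition~\ref{prop:1} holds, and it is what I would emphasize to make the two propositions manifestly equivalent rather than merely consequent.

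The main obstacle I anticipate is definitional rather than computational: making replication rigorous across the whole multi-period tree and not only at the first node. One must check that the stationarity condition Eq.~(\ref{eq: condition of stationarity}), which forces $K_t/P_t=K_0/P_0$ and hence $S_t/P_t=S_0/P_0$ for all $t$, is precisely what keeps the one-period identity Eq.~(\ref{eq:equ}) valid at every node; without it the constant-fraction KO payoff would lose its time-homogeneous law and a single $f^*$ could fail to replicate KO beyond $t=1$. Once time-homogeneity is secured, the pathwise replication follows by a one-step induction on $t$, and the proof is complete.
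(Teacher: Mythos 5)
Your proposal is correct and follows essentially the same route as the paper, which presents Proposition~\ref{prop: replicating KO for any c} not with a standalone proof but as a restatement of Proposition~\ref{prop:1}: the almost-sure equality $\pi_{g^*,c}(X)=\pi_{0,f^*}(X)$, combined with the arbitrage-free put price $P_0=\tfrac{1}{R}\tfrac{u-R}{u-d}(K_0-dS_0)$, is read as the statement that KS replicates every optimal KO. Your additions (uniqueness of $f^*$ via strict concavity of $G(f;\Phi)$, and the remark that the stationarity condition $K_t/P_t=K_0/P_0$ propagates the one-period identity to every node) only make explicit what the paper leaves implicit, so the two arguments coincide in substance.
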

In other words, the KS is the unique replicating portfolio of any optimal KO.
If the relative payoff of the optimal KO strategy differs from that of the optimal KS strategy for some value of $c$, it would imply the possibility of earning an additional profit using options. However, this is untenable, as it would mean an arbitrage opportunity, whereas the option is priced arbitrage-free.

The left panel of Figure \ref{fig:Gko} shows the contour plot of the asymptotic growth rate in Eq.~(\ref{eq:KOgrowth}) for a specific strike. The black line indicates the set of equivalent optimal solutions, which have the same portfolio growth rate. The right panel shows the contour plot of the asymptotic growth rate for two different strikes. Although the two surfaces are different, they coincide at their maximum, showing that the optimal growth rate of the KO strategy does not depend on the strike price (as predicted by Proposition~\ref{prop:1} in the following section). 

In conclusion, when the parameters of the Kelly strategies are well-specified, purchasing options offers no advantage since both strategies coincide, or, in other words, the optimal KS is the replicating portfolio of the KO solution. On the contrary, when parameters are ill-specified, namely in the presence of estimation risk, the two strategies are not equivalent anymore (as it can be shown using simple algebra), but options can be used to hedge against possible misspecifications about price dynamics, as follows.

\section{Managing estimation risk in Kelly investing}\label{sec:ER}

The above discussion assumes that the parameters $u$, $d$, and $p$, representing investors' beliefs/estimates on market dynamics and used {\it ex-ante} in creating the portfolios, are the same as the actual one. This is unrealistic, since they are often estimated noisily from data and might be also subject to non-stationarity. To see the effect of estimation risk in Kelly investing, let us assume that the portfolio is constructed using a value of $u$ which is different\footnote{This can, for example, be interpreted as a misestimation of volatility.} from the market's realized one $u_\text{m}$.
\begin{figure}[t]
    \centering
    \subfigure[]{\includegraphics[width=0.3\textwidth]{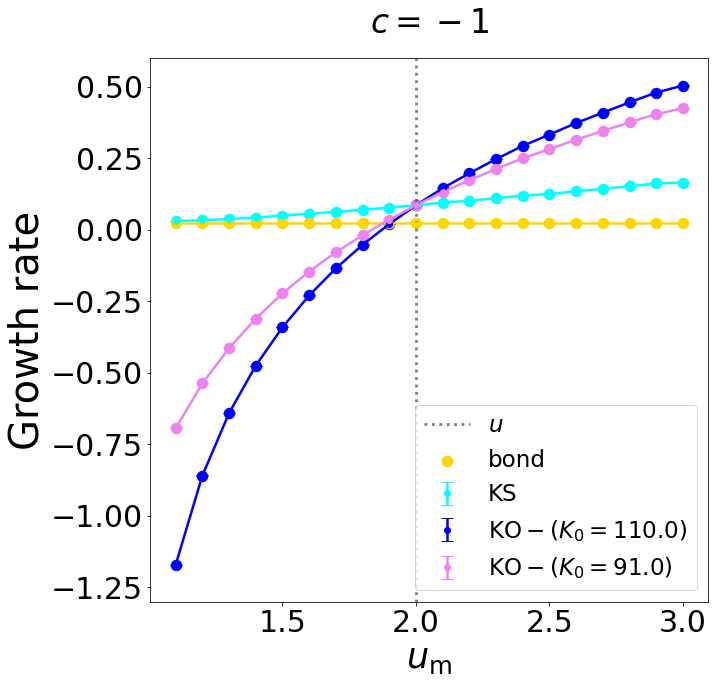}}  
      \subfigure[]{\includegraphics[width=0.3\textwidth]{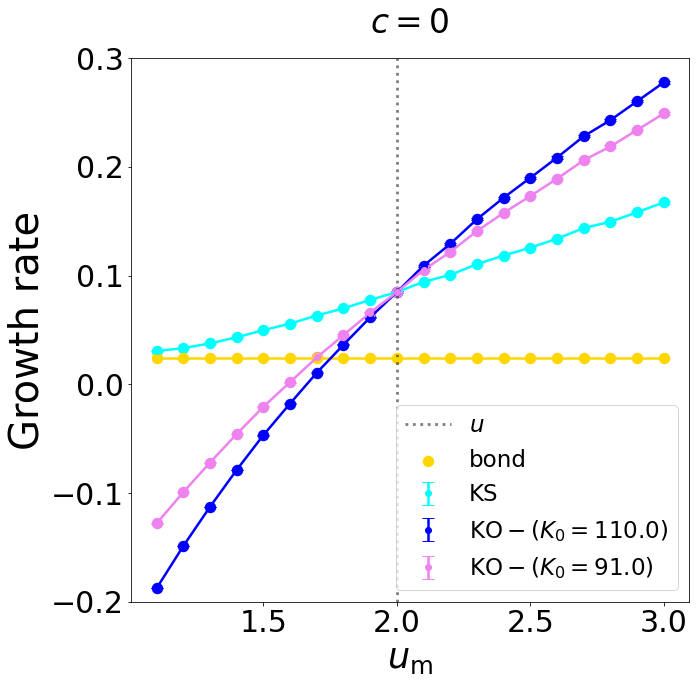}} 
      \subfigure[]{\includegraphics[width=0.3\textwidth]{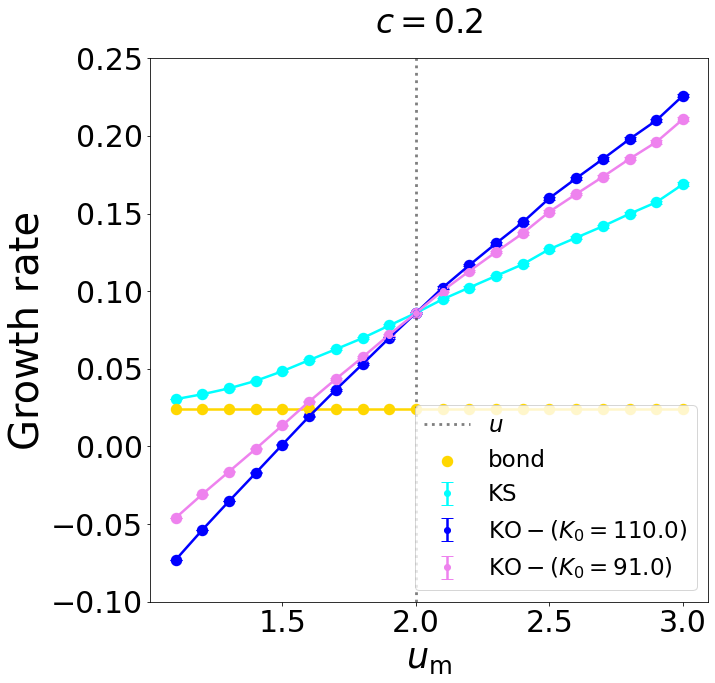}}  
      \subfigure[]{\includegraphics[width=0.3\textwidth]{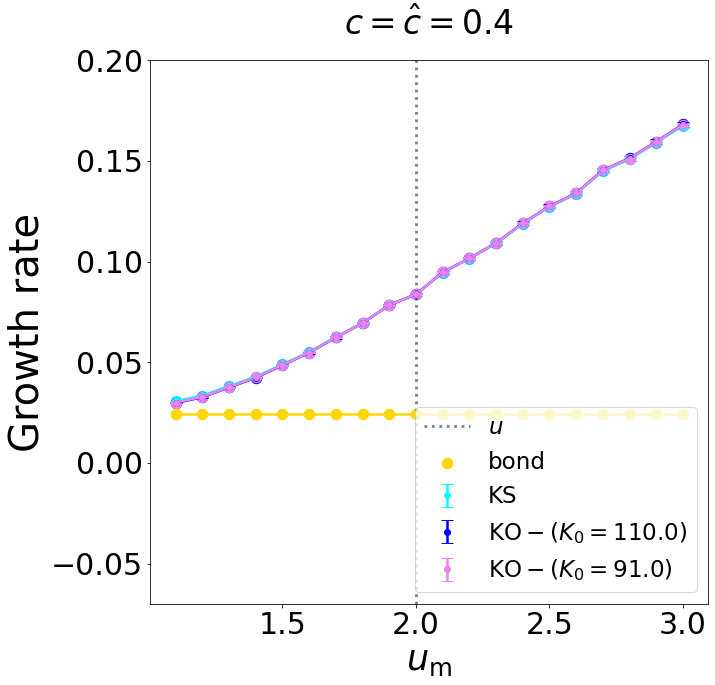}} 
      \subfigure[]{\includegraphics[width=0.3\textwidth]{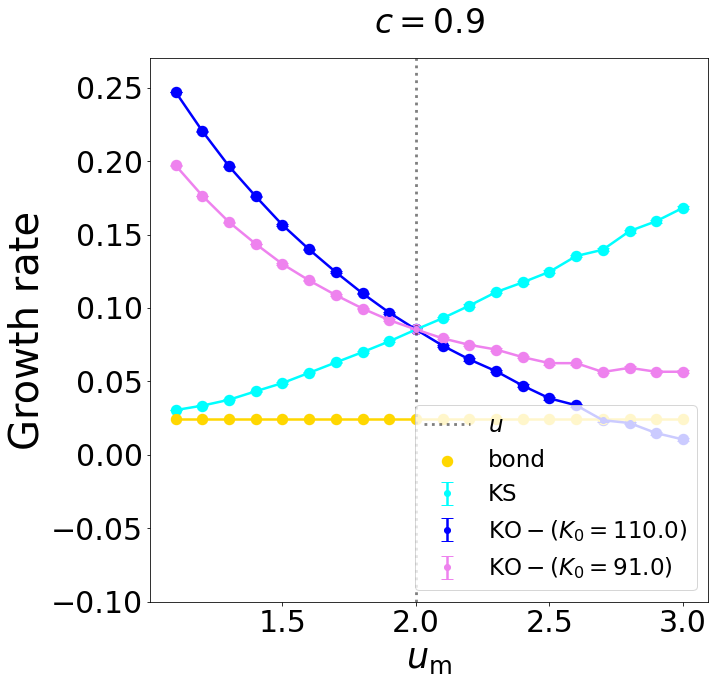}} 
      \subfigure[]{\includegraphics[width=0.3\textwidth]{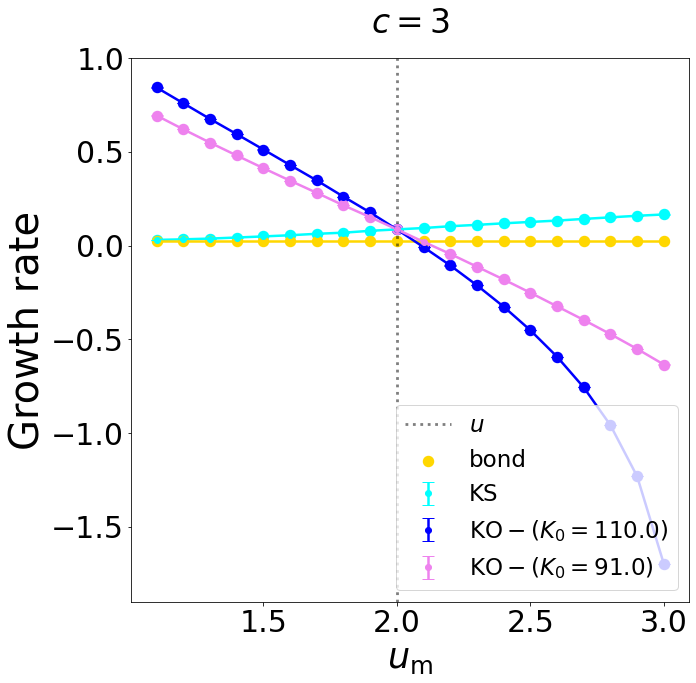}} 
    \caption{\footnotesize Comparison of the growth rate for KS (cyan line) and the  KO (blue and green lines) strategy as a function of $u_\text{m}$. The yellow line represents the bond's growth rate. The other parameters are $u=2$ (gray dotted line), $d_\text{m} = 1/u_\text{m}$, $d=1/u$, $p=0.5$, $R=1.05$, $K_0=110$ and $S_0=100$. The results are averages of $N=500$ simulations with $n=300$ trading rounds each.}
\label{fig: misspecified market V2}
\end{figure}
Figure~\ref{fig: misspecified market V2} shows a numerical estimation for the asymptotically optimal growth rate of KS and KO strategy with out-of the-money option ($K_0=91<S_0$, violet line) and with in-the-money option ($K_0=110>S_0$ blue line) as a function of $u_\text{m}$ when $u=2$ for different values of the hedging strategy parameter $c$. Clearly, when $c$ is chosen so that $g^*=0$, the KO and KS strategies coincide and have the same growth rate under any misspecification. For the chosen parameters, this happens for $c=\hat c=0.4$ (panel (d)). However, the two growth rates differ for a generic $c$; that is, no single KO strategy dominates KS across the whole range of misspecification. KO outperforms KS when $u_\text{m}>u$ ($u_\text{m}<u$) for $c<\hat{c}$ ($c>\hat{c}$), corresponding to a long (short) position on options (see also SI). Since the misspecification is not known, it is not possible to choose a KO strategy that is robust to estimation risk. It is also worth noting that this result is independent of whether the put option is in, at, or out of the money. Finally, we observe that the KS and the KO strategies coincide in the absence of estimation risk, as we also proved theoretically before. 
 
\subsection{Properties of KO solutions and estimation risk}\label{sec:ER}

Here, we better characterize the KO solutions obtained for different $c$ and explain the financial intuition behind the fact that the growth rates of KO and KS differ under misspecification of parameters, i.e. $u_\text{m}$ and $u$ (with the condition $d_\text{m}=1/u_\text{m}$ and $d=1/u$).

Figure \ref{fig: misspecified market V2} shows the growth rate as a function of the mismatch between $u$ and $u_\text{m}$ for different values of $c$. 
When $u_\text{m}>u$ the growth rate of the optimal KS strategy is higher compared to $u=u_\text{m}$. In fact, for $f^*>0$\footnote{The condition for $f^*>0$ is $\E(S_t/S_{t-1})> R$.} , the expected return of the portfolio increases with $u_\text{m}$. 
Interestingly, in the misspecified scenarios the growth rate of the optimal KO strategy does not coincide with that of the KS strategy. Moreover, the growth rate of the KO strategy depends on $c$.  
Specifically, it increases 
for $c<\hat{c}$ and outperforms KS when $u_m>u$ 
and it decreases for $c>\hat{c}$ and outperforms KS when $u_m<u$, while KO and KS coincide in the misspecified scenarios when $c=\hat{c}$ (panel (d)).  

Roughly speaking, when $c\ll \hat{c}$, the optimal KO suggests buying more options than stocks, eventually leveraging the investment by short-selling stocks and bonds. Under the misspecification $u_\text{m}>u$, the extra profit can be explained in terms of a mismatch between the option price relative to payoff $\max\{K_{t-1}-dS_t,0\}$ and a realized payoff $K_{t-1}-d_\text{m}S_t>K_{t-1}-dS_t$ in the case of a down movement (and vice versa when $u_\text{m}<u$). On the contrary, when $c\gg \hat{c}$, the optimal KO suggests short-selling put options. Under the misspecification $u_\text{m}<u$, the option is paid higher because $K_{t-1}-dS_t>K_{t-1}-d_\text{m}S_t$, thus justifying an increase of the growth rate, and vice versa when $u_\text{m}>u$.

The following proposition better characterizes the intuition expressed above.

\begin{proposition}\label{prop: relationship between c and g}
 Let be $X\sim\Phi(p)$ an i.i.d. Bernoulli variable associated with the price dynamics, define
    \begin{align*}
        c_u(g) & = -\frac{g\tilde{u}+(1-g)R}{u-R}\\
        c_d(g) & = -\frac{g\tilde{d}+(1-g)R}{d-R}
    \end{align*}
and consider $K_0\in(dS_0,uS_0)$. Let $g_0\in\R$, then it is:
    \begin{itemize}
        \item [1.] For any $g>g_0$, there is an open interval $I_c^{g}$ such that $I_c^{g}\subset (-\infty,c_d(g_0))$ and ${\pi}_{g,c}(X)>0$ a.s. for any $c\in I_c^{g}$;
        \item[2.] For any $g<g_0$, there is an open interval $I_{c}^{g}$ such that $I_c^{g}\subset(c_u(g_0),\infty)$ and ${\pi}_{g,c}(X)>0$ a.s. for any $c\in I_c^{g}$.
    \end{itemize}
\end{proposition}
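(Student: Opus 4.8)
The plan is to reduce the almost-sure positivity of $\pi_{g,c}(X)$ to a pair of linear inequalities in $c$ and then to exploit the monotonicity of their boundary lines in $g$. First I would note that, since both $X=u$ and $X=d$ occur with positive probability, $\pi_{g,c}(X)>0$ a.s. is equivalent to the two conditions $\pi_{g,c}(u)>0$ and $\pi_{g,c}(d)>0$, where from Eq.~\eqref{eq:KOpayofft} one reads $\pi_{g,c}(u)=g\tilde u+(1-g)R+c(u-R)$ and $\pi_{g,c}(d)=g\tilde d+(1-g)R+c(d-R)$. Using the no-arbitrage bounds $u-R>0$ and $d-R<0$, I would solve each inequality for $c$: dividing the first by $u-R>0$ gives $c>c_u(g)$, while dividing the second by $d-R<0$ (which reverses the inequality) gives $c<c_d(g)$. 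Hence the admissible set for a fixed $g$ is exactly the open interval $(c_u(g),c_d(g))$.

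The core step is to describe how this interval moves as $g$ varies. Viewing $c_u(g)$ and $c_d(g)$ as affine functions of $g$, their slopes are $-(\tilde u-R)/(u-R)$ and $-(\tilde d-R)/(d-R)$, respectively. I would substitute the arbitrage-free price $P_0=\tfrac{1}{R}\tfrac{u-R}{u-d}(K_0-dS_0)$ into $\tilde u=(u-R)S_0/P_0$ and $\tilde d=(K_0-RS_0)/P_0$ and simplify, obtaining $\tilde u-R=R(uS_0-K_0)/(K_0-dS_0)$ and $\tilde d-R=R(d-R)(uS_0-K_0)/\big((u-R)(K_0-dS_0)\big)$. The strike constraint $K_0\in(dS_0,uS_0)$ makes both $uS_0-K_0$ and $K_0-dS_0$ strictly positive, and a short computation then shows that the two slopes coincide and equal $-R(uS_0-K_0)/\big((u-R)(K_0-dS_0)\big)<0$. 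Thus $c_u$ and $c_d$ are parallel, strictly decreasing lines. Evaluating the gap at $g=0$ gives the constant value $c_d(g)-c_u(g)=R(u-d)/\big((u-R)(R-d)\big)>0$ for all $g$, so the admissible interval is nonempty with constant positive width.

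With monotonicity in hand the two claims follow by taking $I_c^{g}:=(c_u(g),c_d(g))$, which is nonempty and satisfies $\pi_{g,c}(X)>0$ a.s. for every $c$ in it. For part~1, $g>g_0$ together with $c_d$ strictly decreasing gives $c_d(g)<c_d(g_0)$, whence $I_c^{g}\subset(-\infty,c_d(g))\subset(-\infty,c_d(g_0))$. For part~2, $g<g_0$ together with $c_u$ strictly decreasing gives $c_u(g)>c_u(g_0)$, whence $I_c^{g}\subset(c_u(g),\infty)\subset(c_u(g_0),\infty)$.

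I expect the only delicate part to be the algebraic simplification that fixes the signs of $\tilde u-R$ and $\tilde d-R$ and reveals that the two boundary lines are parallel; everything else is a direct consequence of linearity in $c$ and the sign pattern $d-R<0<u-R$. It is worth keeping track of the hypothesis $K_0\in(dS_0,uS_0)$, since it is exactly what simultaneously guarantees the strict monotonicity (through $uS_0-K_0>0$) and the positivity of the option-pricing quantities (through $K_0-dS_0>0$ entering $P_0$).
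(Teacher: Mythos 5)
Your proposal is correct and takes essentially the same route as the paper: characterize the admissible set for fixed $g$ as the open interval $(c_u(g),c_d(g))$ via the sign pattern $d-R<0<u-R$, show that $c_u$ and $c_d$ are strictly decreasing (indeed parallel) lines in $g$, and read off the two containments from monotonicity. The only difference is presentational: the paper isolates the key identity $\frac{\tilde u-R}{u-R}=\frac{\tilde d-R}{d-R}>0$ as a preliminary result (Lemma~\ref{first lemma}) and cites it, whereas you re-derive the same fact inline by substituting the arbitrage-free price $P_0$ and computing the slopes and the (constant) gap explicitly.
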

\begin{proof}
    See Appendix \ref{app:proofs}.
\end{proof}
A corollary of the above proposition is the following.
\begin{corollary}\label{cor: short-selling vs leverage}
    Let be $X\sim\Phi(p)$ an i.i.d. Bernoulli variable associated with the
price dynamics and consider $K_0\in(dS_0,uS_0)$, then it is:
    \begin{itemize}
        \item [1.] If $c<c_d(1)$ then ${\pi}_{g,c}(X)>0$ a.s. for any $g>1$.
        \item [2.] If $c>c_u(0)$ then ${\pi}_{g,c}(X)>0$ a.s. for any $g<0$.
    \end{itemize}
\end{corollary}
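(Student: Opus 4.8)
The plan is to rewrite the almost-sure positivity condition as two linear inequalities and read them, for a \emph{fixed} hedge parameter $c$, as bounds on the option weight $g$. From Eq.~(\ref{eq:KOpayofft}) the two branches are affine in $g$:
\begin{equation*}
\pi_{g,c}(u) = (\tilde u - R)g + R + c(u-R), \qquad \pi_{g,c}(d) = (\tilde d - R)g + R + c(d-R).
\end{equation*}
First I would fix the signs of the two slopes. Since $d<R<u$ we have $u-R>0>d-R$, and inserting the arbitrage-free price $P_0=\frac{(u-R)(K_0-dS_0)}{R(u-d)}$ into $\tilde u$ and $\tilde d$ yields, after simplification, $\tilde u - R = R\frac{uS_0-K_0}{K_0-dS_0}>0$ and $\tilde d - R = R\frac{(R-d)(K_0-uS_0)}{(u-R)(K_0-dS_0)}<0$, both signs using $K_0\in(dS_0,uS_0)$. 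Hence $\pi_{g,c}(u)$ is strictly increasing and $\pi_{g,c}(d)$ strictly decreasing in $g$, so $\pi_{g,c}(X)>0$ a.s. is equivalent to $g$ lying in the open interval $(g_u(c),g_d(c))$, where $g_u(c),g_d(c)$ are the respective $g$-roots of the two branches.

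The step I expect to be the main obstacle is showing this interval is \emph{never empty}, so that the threshold conditions below actually produce admissible weights (rather than $g_d(c)>1$ being vacuous). The crucial ingredient is the arbitrage identity
\begin{equation*}
(\tilde u - R)(d-R) = (\tilde d - R)(u-R),
\end{equation*}
which is immediate from the two simplified expressions, since both sides equal $R\frac{(R-d)(K_0-uS_0)}{K_0-dS_0}$. Substituting it into $g_d(c)-g_u(c)$ cancels every term carrying $c$ and leaves the constant $g_d(c)-g_u(c)=\frac{R(\tilde d-\tilde u)}{(\tilde u-R)(\tilde d-R)}$, which is strictly positive: its numerator $R(\tilde d-\tilde u)$ is negative because $\tilde u-\tilde d=(uS_0-K_0)/P_0>0$, and its denominator $(\tilde u-R)(\tilde d-R)$ is also negative. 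Therefore $g_u(c)<g_d(c)$ for every $c\in\R$, and the admissible set of option weights is always a nonempty interval. (Equivalently, nonemptiness follows from the interior maximiser $g^*$ of Eq.~(\ref{eq: optimal solution KO}), which satisfies the strict constraints.)

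It then remains to locate the endpoints relative to the levels $1$ and $0$. Using monotonicity and the sign of $\tilde d-R$, one has $g_d(c)>1$ iff $\pi_{1,c}(d)=\tilde d+c(d-R)>0$, i.e. iff $c<-\tilde d/(d-R)=c_d(1)$; symmetrically $g_u(c)<0$ iff $\pi_{0,c}(u)=R+c(u-R)>0$, i.e. iff $c>-R/(u-R)=c_u(0)$. For part~1, the hypothesis $c<c_d(1)$ gives $g_d(c)>1$, and combined with $g_u(c)<g_d(c)$ the admissible interval meets $(1,\infty)$ in the nonempty subinterval $(\max\{g_u(c),1\},g_d(c))$, so leveraged option positions $g>1$ with $\pi_{g,c}(X)>0$ a.s. are admissible. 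For part~2, $c>c_u(0)$ gives $g_u(c)<0$, and the admissible interval meets $(-\infty,0)$ in $(g_u(c),\min\{g_d(c),0\})$, yielding admissible short positions $g<0$. This is precisely the content of Proposition~\ref{prop: relationship between c and g} read at $g_0=1$ and $g_0=0$, so the corollary follows.
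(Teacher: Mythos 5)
Your computations are all correct, and your argument takes a genuinely different route from the paper's; but it is worth being precise about what is actually provable. Read literally (``for \emph{every} $g>1$''), the corollary is false: for fixed $c$, the down branch $\pi_{g,c}(d)=g(\tilde{d}-R)+R+c(d-R)$ has strictly negative slope in $g$ (since $\tilde{d}<R$), so it turns negative once $g$ is large enough, and symmetrically the up branch fails as $g\to-\infty$ in part~2. What you prove is the sensible existence reading: if $c<c_d(1)$, the set of admissible option weights meets $(1,\infty)$ in a nonempty open interval, and if $c>c_u(0)$ it meets $(-\infty,0)$. The paper, by contrast, gives no standalone proof: the corollary is presented as an instantiation of Proposition~\ref{prop: relationship between c and g} at $g_0=1$ and $g_0=0$, and what that proposition and its proof deliver is the \emph{converse}, necessity direction --- if $g>1$ and $\pi_{g,c}(X)>0$ a.s., then $c\in(c_u(g),c_d(g))\subset(-\infty,c_d(1))$. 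So the two arguments formalize the loosely quantified statement in opposite (and both true) directions.

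Methodologically, yours is the dual of the paper's. The paper (Lemma~\ref{second lemma} inside the proof of Proposition~\ref{prop: relationship between c and g}) fixes $g$ and describes the admissible hedge parameters as the interval $(c_u(g),c_d(g))$, exploiting monotonicity of $c_u,c_d$ in $g$; you fix $c$, describe the admissible option weights as the interval $(g_u(c),g_d(c))$, and then locate its endpoints relative to the levels $1$ and $0$. Both routes rest on exactly the same ingredients: the signs $\tilde{d}<R<\tilde{u}$ and the identity $(\tilde{u}-R)(d-R)=(\tilde{d}-R)(u-R)$, which is Lemma~\ref{first lemma} cross-multiplied. Your version buys two things the proposition's statement alone does not give: the sufficiency direction (the proposition only asserts that each $I_c^{g}$ lies inside $(-\infty,c_d(g_0))$, not that those intervals cover it, so the corollary's stated implication cannot be read off from it), and the neat fact that the admissible $g$-interval has width $R(\tilde{d}-\tilde{u})/\bigl[(\tilde{u}-R)(\tilde{d}-R)\bigr]$, constant in $c$. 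What the paper's route buys is generality and economy: Proposition~\ref{prop: relationship between c and g} holds for every threshold $g_0$ at once, making the corollary (in its necessity reading) a two-line specialization. Two caveats on your write-up, neither essential: the parenthetical fallback for nonemptiness via the interior maximizer $g^*$ of Eq.~(\ref{eq: optimal solution KO}) is circular as stated, since deriving $g^*$ from the first-order condition presupposes a nonempty feasible set; and your closing sentence claiming your statement ``is precisely'' Proposition~\ref{prop: relationship between c and g} read at $g_0\in\{0,1\}$ overstates the match, for the same quantifier reason as above. Your direct computation, however, is self-contained and correct.
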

Based on Corollary \ref{cor: short-selling vs leverage}, we get a more formal characterization of the optimal KO strategies across different $c$ in Figure~\ref{fig: misspecified market V2}. When $c<c_d(1) = 0.25$, then it is $g>1$, thus the optimal KO strategy considers buying put options on leverage, see panels (a)-(c). When $c>c_u(0) = 1.1$, see panel (f), the optimal KO strategy considers short-selling put options; the other cases, see panels (d)-(e), correspond to $g\in[0,1]$.

\subsection{Convex combination of KO strategies}

To find a robust strategy to any misspecification, we propose to use a convex combination of KO strategies whose growth rate will converge as $n\rightarrow\infty$ to the largest one, which will dominate the other over time. This idea is similar to the one used in Universal Portfolios (\cite{CoverUP}) and in asset allocation strategies (\cite{KanZhou2007,TuZhou2011}). More specifically, we choose two hedging parameters $c_1<\hat{c}$ and $c_2>\hat{c}$ and compute the associated optimal fractions  $g_1^*$ and $g_2^*$ associated with two KO strategies, KO$_1$ and KO$_2$. Then, we invest at each time a fraction $a$ of the wealth in one portfolio and a fraction $1-a$ in the other one, for some $a\in(0,1)$. Let us name KO {\it convex} (KOc) the new strategy. The wealth of KOc at time $n$ results then equal to
\begin{equation}\label{eq:Wnconvex}
W_n^{\mbox{KOc}} = aW_n^\text{(1)}+(1-a)W_n^\text{(2)}.
\end{equation}


\begin{theorem}\label{thrm: thorem convex strategy}
Let $(\Omega, \{\mathcal{F}_t\}_{t=0}^n,\PP)$ be the probability space associated with the binomial tree market, and $X\sim\Phi(p)$ an i.i.d. Bernoulli variable  associated with the price dynamics. Then, the asymptotic exponential growth rate of KOc is 
\begin{equation}\label{agrKOc}
        \lim_{n\rightarrow\infty} \frac{1}{n}\log\left(\frac{{W}_n^{\mbox{KOc}}}{W_0}\right) = \max\big\{\E[\log{\pi}_{g_1^*,c_1}(X)],\E[\log{\pi}_{g_2^*,c_2}(X)]\big\}\:\:\mbox{a.s.}
    \end{equation}
\end{theorem}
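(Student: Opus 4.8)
The plan is to reduce the statement to the strong law of large numbers applied separately to each of the two component portfolios, combined with the elementary fact that the logarithm of a sum of positive terms equals, up to an additive $O(1)$, the logarithm of the largest term. First I would record that, exactly as in Eq.~(\ref{eq:KOgrowth}), each component wealth satisfies $W_n^{(i)}/W_0 = \prod_{t=1}^n \pi_{g_i^*,c_i}(X_t)$, so that
\[
\frac{1}{n}\log\frac{W_n^{(i)}}{W_0} = \frac{1}{n}\sum_{t=1}^n \log \pi_{g_i^*,c_i}(X_t) \overset{a.s.}{\longrightarrow} G_i, \qquad i=1,2,
\]
where $G_i := \E[\log\pi_{g_i^*,c_i}(X)]$. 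This convergence holds because the $X_t$ are i.i.d. and, since the optimal fractions satisfy the strict positivity constraints in Eq.~(\ref{eq: strict inequalities}), the relative payoff takes only two positive values, so $\log\pi_{g_i^*,c_i}(X)$ is bounded and hence integrable. I would then fix a single event $\Omega_0$ of full probability on which both limits hold (the intersection of the two individual full-measure events), and argue pointwise on $\Omega_0$.

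Next, writing $P_n^{(i)} := W_n^{(i)}/W_0 > 0$, Eq.~(\ref{eq:Wnconvex}) gives $W_n^{\mbox{KOc}}/W_0 = aP_n^{(1)} + (1-a)P_n^{(2)}$. The key elementary bound is
\[
\max\{aP_n^{(1)}, (1-a)P_n^{(2)}\} \le aP_n^{(1)} + (1-a)P_n^{(2)} \le 2\max\{aP_n^{(1)}, (1-a)P_n^{(2)}\}.
\]
Taking $\tfrac1n\log$ of this chain and using $\tfrac1n\log\max\{x,y\} = \max\{\tfrac1n\log x, \tfrac1n\log y\}$ for $x,y>0$, I obtain a sandwich for $\tfrac1n\log(W_n^{\mbox{KOc}}/W_0)$ whose lower bound equals
\[
\max\left\{\frac{\log a}{n} + \frac{1}{n}\log P_n^{(1)},\; \frac{\log(1-a)}{n} + \frac{1}{n}\log P_n^{(2)}\right\}
\]
and whose upper bound is this same quantity plus $\tfrac{\log 2}{n}$.

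Finally, on $\Omega_0$ the deterministic terms $\tfrac{\log a}{n}$, $\tfrac{\log(1-a)}{n}$ and $\tfrac{\log 2}{n}$ all vanish as $n\to\infty$, while $\tfrac1n\log P_n^{(i)}\to G_i$; since $\max$ is continuous, both the lower and the upper bound converge to $\max\{G_1,G_2\}$. The squeeze theorem then yields Eq.~(\ref{agrKOc}). There is no genuine analytic obstacle in this argument; the only point requiring a little care is the passage from the logarithm of the convex combination to the maximum of the two growth rates, which is precisely what the $\max \le \mbox{sum} \le 2\max$ inequality handles. This is also the conceptual heart of the result: regardless of the fixed mixing weight $a\in(0,1)$, the faster-growing component grows exponentially larger than the other, so it eventually dominates the combined portfolio and dictates its asymptotic growth rate.
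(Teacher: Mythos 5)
Your proof is correct, and its skeleton is the same as the paper's: both arguments sandwich $\tfrac1n\log\big(W_n^{\mbox{KOc}}/W_0\big)$ between max-type bounds that differ from the maximum of the two component log-wealths only by deterministic additive constants, then divide by $n$, invoke the component-wise strong law of large numbers, and finish with the squeeze theorem. The difference is how the sandwich is produced. The paper proves a dedicated ``Convex-Log-Sum-Exp'' lemma, $\max_i x_i + \log\min_i\lambda_i \le \log\sum_i \lambda_i e^{x_i} \le \max_i x_i$, whose upper half is derived from the information-theoretic Log-Sum inequality of Cover and Thomas; you instead use the elementary two-term estimate $\max\{aP_n^{(1)},(1-a)P_n^{(2)}\} \le aP_n^{(1)}+(1-a)P_n^{(2)} \le 2\max\{aP_n^{(1)},(1-a)P_n^{(2)}\}$. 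Your route is more self-contained and makes the mechanism transparent (the faster-growing portfolio exponentially dominates the sum, and all constants are killed by the $1/n$ factor), whereas the paper's lemma is stated for an arbitrary number of components and gives the cleaner upper bound $\max_i x_i$ with no additive constant --- though the Log-Sum machinery is arguably overkill there, since $\sum_i\lambda_i e^{x_i}\le e^{\max_i x_i}$ is immediate. You also make explicit two small points the paper leaves implicit: the integrability of $\log\pi_{g_i^*,c_i}(X)$ needed for the strong law (it is bounded, since $X$ takes two values and the optimal payoffs are strictly positive by the constraints in Eq.~(\ref{eq: strict inequalities})), and the restriction to a single full-measure event on which both component limits hold, so that the pointwise squeeze is legitimate.
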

\begin{proof}
    See appendix \ref{app:proofs}.
\end{proof}

\begin{figure}
    \centering
    \subfigure[]
{\includegraphics[width=0.44\textwidth]{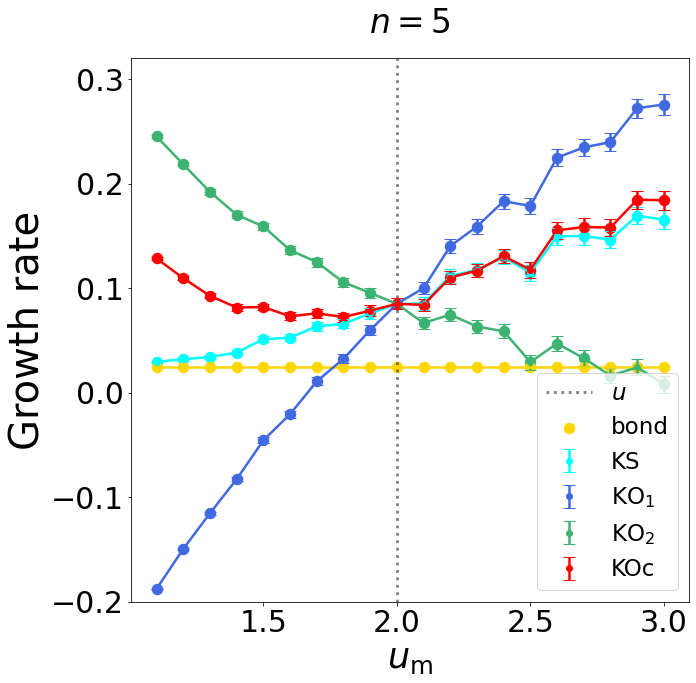}} 
    \subfigure[]
{\includegraphics[width=0.44\textwidth]{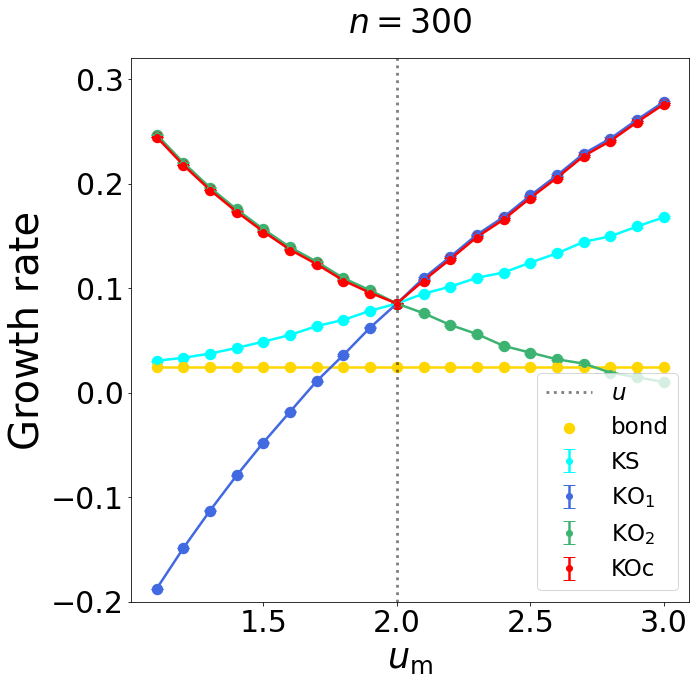}} 
    \caption{\footnotesize Comparison of optimal Kelly strategies in the presence of estimation risk. The strike price at time $0$ is $K_0=110$\euro~, $(c_1,c_2) = (0,0.9)$ and the probability of an upward move is $p=0.5$. The analysis is based on $N=500$ simulations over different trading periods. The parameters are the same as in Figure \ref{fig: misspecified market V2}. The KOc strategy is obtained with $a=1/2$.}
    \label{fig: Comp KO,KS,KOVKOD,KOVKS}
\end{figure}

Notice that this is an asymptotic result and might not hold for finite time. Figure~\ref{fig: Comp KO,KS,KOVKOD,KOVKS} shows the simulated growth rate for $n=5$ (left) and $n=300$ (right) trading periods. For a small time horizon, the KOc strategy does not outperform KO$_1$ and KO$_2$ under any misspecification of the parameters; however, in the long run, the growth rate of KOc coincides with the best strategy under a specific misspecification. As such, the asymptotic growth rate of the KOc portfolio is always larger than (or equal to, when $u=u_\text{m}$) the one achieved by the KS strategy, showing that the estimation risk has been fully eliminated.

\begin{figure}[h]
    \centering
    \subfigure[]{\includegraphics[width=0.24\textwidth]{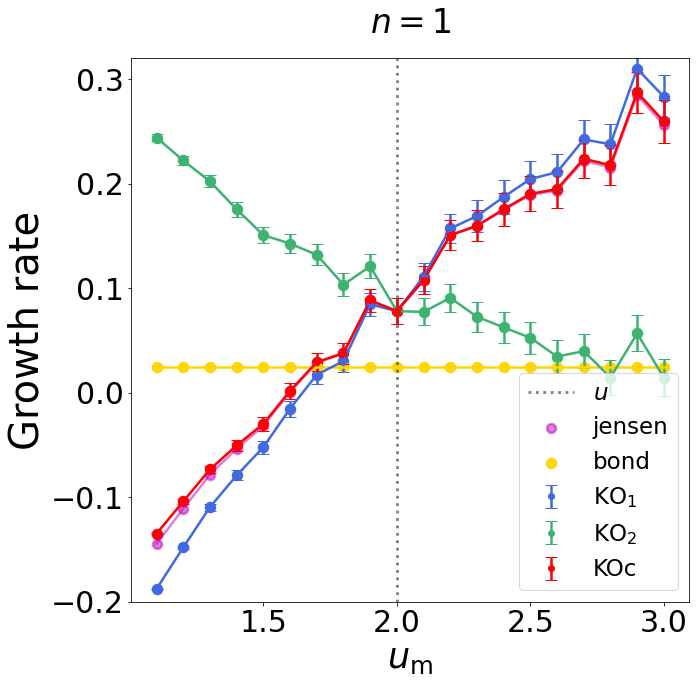}}  
      \subfigure[]{\includegraphics[width=0.24\textwidth]{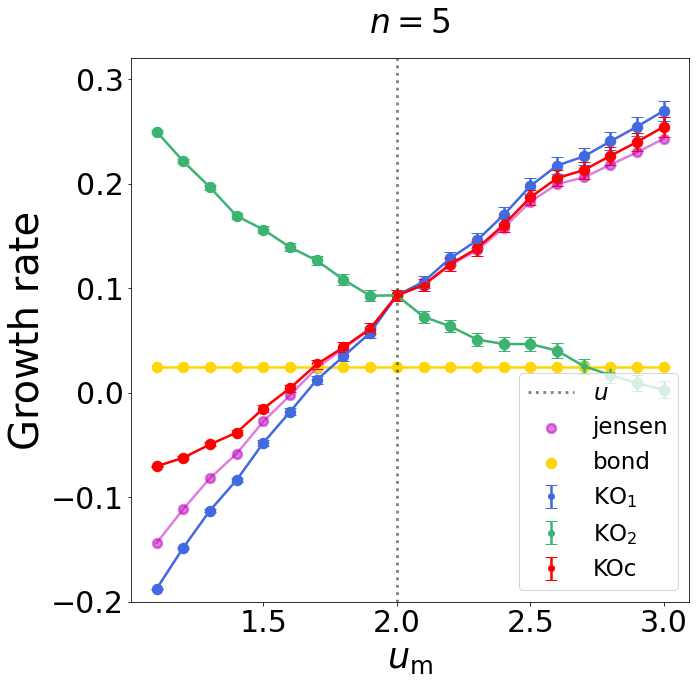}} 
      \subfigure[]{\includegraphics[width=0.24\textwidth]{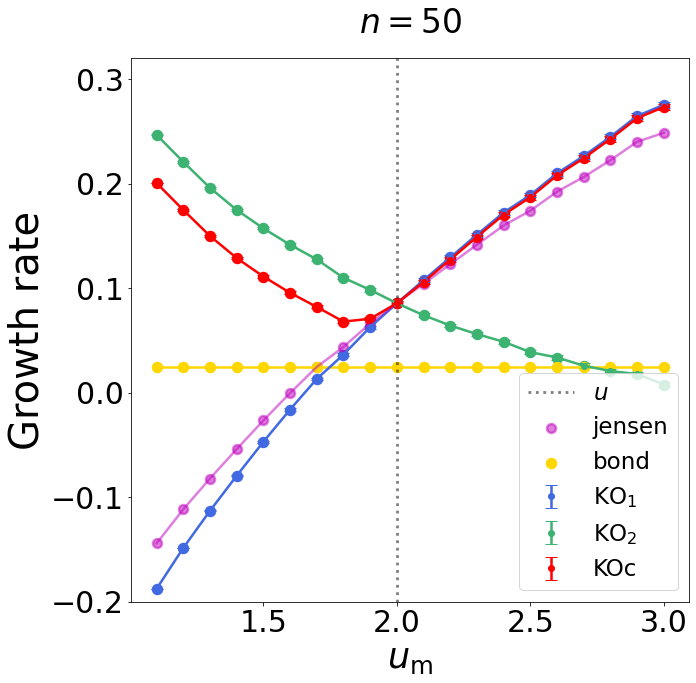}}  
      \subfigure[]{\includegraphics[width=0.24\textwidth]{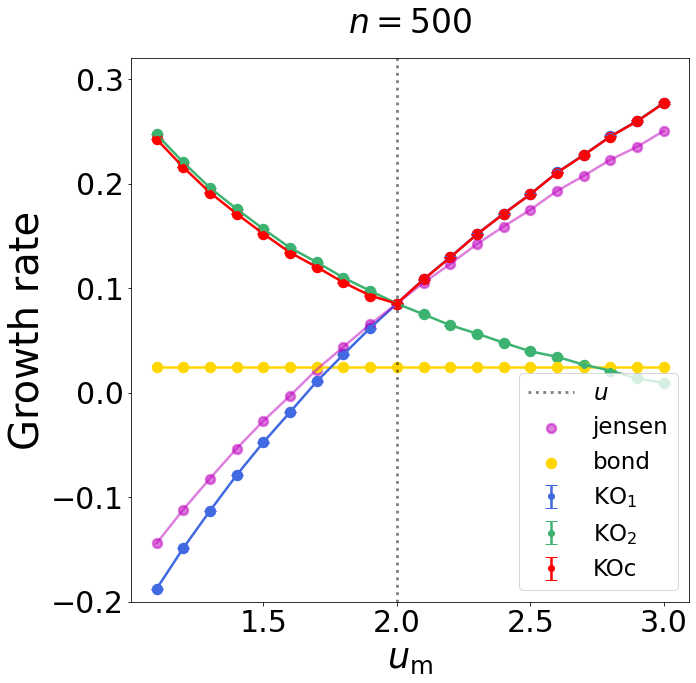}} 
       \subfigure[]{\includegraphics[width=0.24\textwidth]{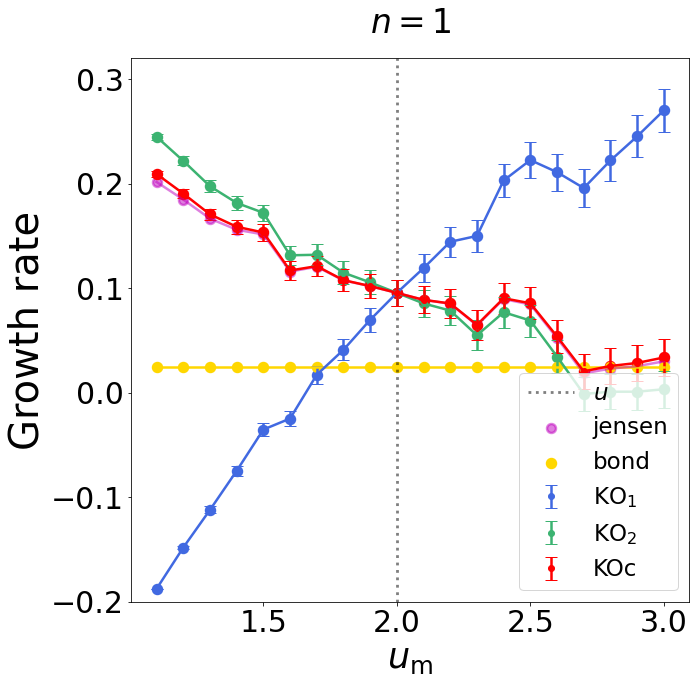}}  
      \subfigure[]{\includegraphics[width=0.24\textwidth]{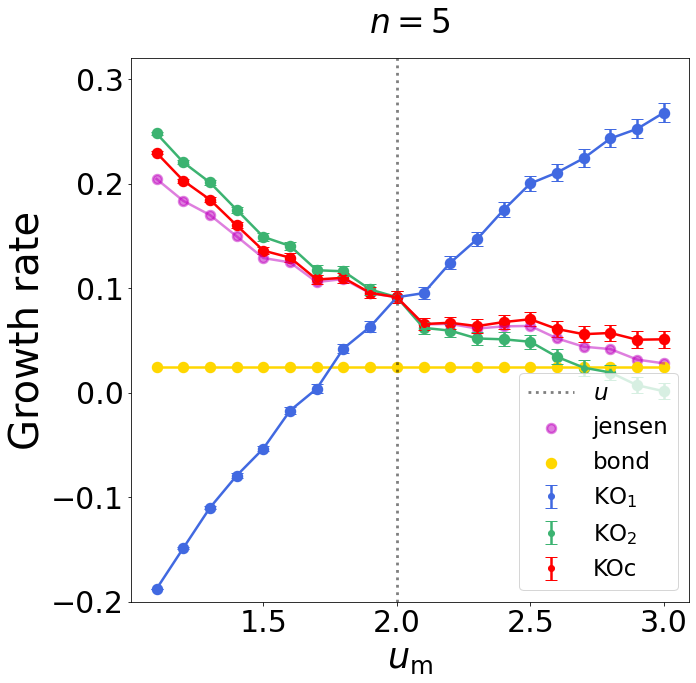}} 
      \subfigure[]{\includegraphics[width=0.24\textwidth]{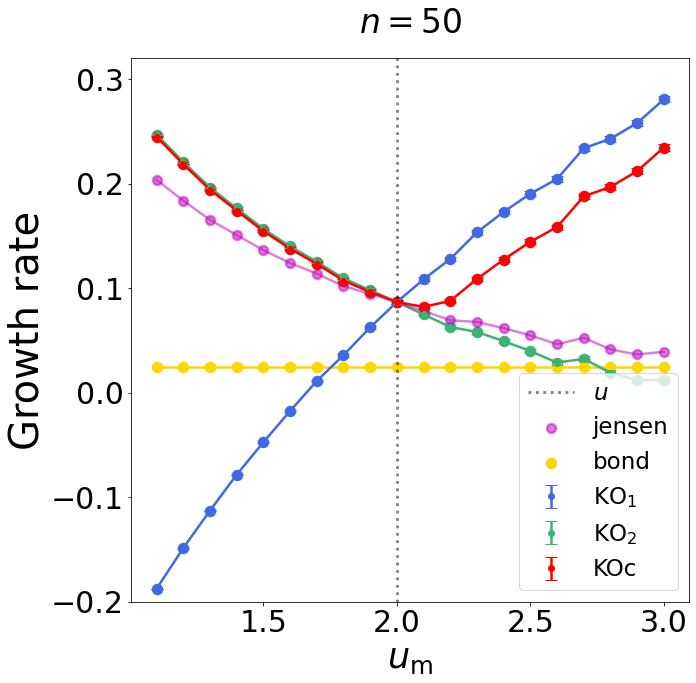}}  
      \subfigure[]{\includegraphics[width=0.24\textwidth]{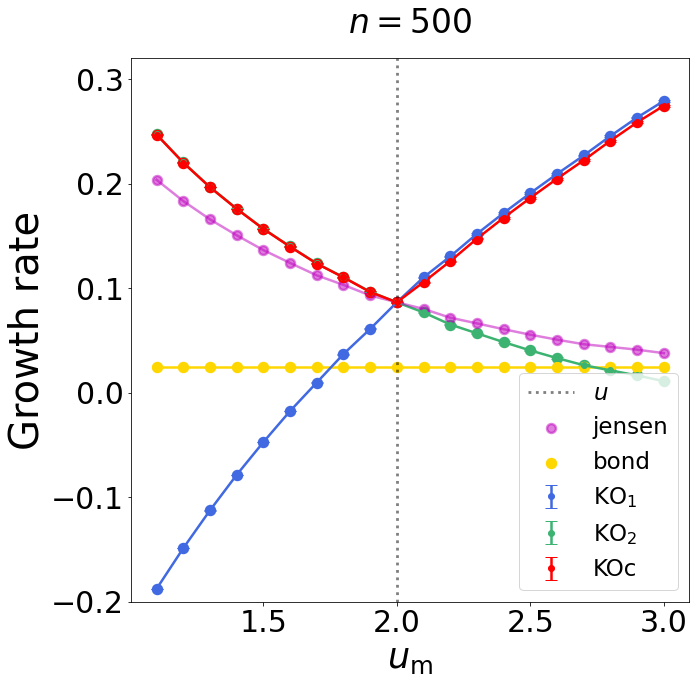}} 
    \caption{\footnotesize Comparison of the finite time growth rate in a misspecified setting. The parameters are set to $u=2$, $d=1/u$, $K_0=110$, $(c_1,c_2) = (0,0.9)$ and $d_\text{m} = 1/u_\text{m}$. The results are obtained by averaging over $N=500$ simulations and different panels refer to different investment time $n$ and mixing parameter $a$. {\it Top  panels}: $a=0.9$. {\it Bottom panels}: $a=0.1$.}
    \label{fig: jensen a=0.9,0.1}
\end{figure}
More detailed simulation results are shown in Figure~\ref{fig: jensen a=0.9,0.1}, where a comparative analysis is presented to characterize the growth rate of the different Kelly strategies for finite investment periods $n$ in the presence of misspecified parameters. While for small $n$, KOc is not the optimal strategy in the whole range of misspecification, it becomes the best strategy for any $u_{m}$ for large $n$. The figure also shows a purple line (termed as ``Jensen") showing the dynamics of
\begin{equation}\label{eq: jensen}
    a\frac{\log W_n^{(1)}}{n}+(1-a)\frac{\log W_n^{(2)}}{n}.
\end{equation} 
Because of Jensen's inequality, the growth rate of the KOc strategy is expected to be always above the one in Eq.(\ref{eq: jensen}), as observed numerically.

From a practitioner's point of view, the implementation of the KOc strategy requires the creation of two KO portfolios, namely two different {\it portfolio insurance strategies}\footnote{Portfolio insurance is the technical word used among practitioners to refer to the hedging strategy defined within the Kelly framework in Section~\ref{seq:kelly}, see for example \cite{benninga1985optimality,ho2010portfolio}.} with hedging parameters $c_1$ and $c_2$. One might ask whether the two strategies can be replaced by two replicating portfolios of stock and bond only, thus avoiding option trading (in particular, short selling of options), which could be prevented for some investors. However, in line with Proposition~\ref{prop:1}, the only optimal Kelly strategy using stock and bond only is the KS solution. As such, the KOc would coincide with KS, which, as we have seen, is sensitive to estimation risk. Therefore, the use of options remains essential, since the two $\mbox{KO}_1$ and $\mbox{KO}_2$ strategies are needed to cover from misspecification of parameters in both directions, exploiting the mismatch between option prices and actual market realizations.


\section{Conclusions}\label{sec:concl}

The Kelly criterion, introduced by \cite{kelly}, revolutionized the fields of gambling and portfolio optimization by providing a robust framework for maximizing long-term wealth growth while controlling market risk. However, high sensitivity to estimation risk has long been noted as one of the practical limitations of the investment approach. This paper proposes a solution to this problem for a binomial tree market by integrating option trading with log-optimal portfolios to mitigate estimation risk within the Kelly framework. A proper convex combination of Kelly with Options (KO) strategies is proved to be asymptotically robust to any parameter misspecification. 

In continuous time, Kelly criterion, namely  maximizing the expected logarithmic utility of wealth, leads to the Merton's portfolio problem. For example, in a Black-Scholes market with a stock and a bond, the solution for the growth-optimal (Kelly) policy provides a constant fraction to invest in the stock and the remaining in the bond, see, e.g., \cite{merton1969lifetime}, similarly to the standard KS solution. Uncertainty in the form of incomplete information about the price dynamics has been studied for Merton's portfolio problem, considering for example unknown drift (\cite{lakner1998optimal}), unobserved market regimes (\cite{sass2004optimizing}), or latent jump processes (\cite{callegaro2006portfolio}), typically showing that the standard Merton solution is modified by using filtered values in the place of original parameters. Whereas including European option trading has been seen as {\it redundant} in terms of optimal growth in complete markets since \cite{merton1969lifetime} (consistently with Proposition \ref{prop:1}), \cite{romano1997contingent} and works hereafter have shown however that adding options can be useful for market-completion in the case of stochastic volatility models and for handling variance risk robustly. Uncertainty and estimation risks remain an open point within this last context, and the generalization of the KO approach to continuous time appears to be the natural outlook of the present work.

These advancements underscore the importance of adaptive and hybrid strategies in achieving optimal portfolio performance. By combining theoretical insights with practical considerations, the extended Kelly strategies provide a robust foundation for navigating complex financial markets while addressing inherent uncertainties.

\section*{Acknowledgments}
Authors thanks Stefano Marmi for the useful discussions. Piero Mazzarisi acknowledges funding from the Italian Ministry of University and Research under the PRIN project {\it Realized Random Graphs: A New Econometric Methodology for the Inference of Dynamic Networks} (grant agreement n. 2022MRSYB7).

\appendix
\section{Proofs}\label{app:proofs}
This section contains the proofs of propositions and theorems presented in the paper.
\subsection{Preliminaries}
First, we prove the following lemma.
\begin{lemma}\label{first lemma}
    Let $K_0\in(dS_0,uS_0)$ then it holds:
    \begin{equation}\label{eq: objective in lemma}
        \frac{R-\tilde{u}}{R-u}   = \frac{R-\tilde{d}}{R-d}>0
    \end{equation}
where 
\begin{equation}
    \tilde{u} = u\frac{S_0}{P_0}-R\frac{S_0}{P_0}\quad\text{and}\quad\tilde{d} = \frac{K_0}{P_0}-R\frac{S_0}{P_0}.
\end{equation}
\end{lemma}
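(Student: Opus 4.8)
The plan is to substitute the arbitrage-free put price $P_0 = \frac{1}{R}\frac{u-R}{u-d}(K_0-dS_0)$ (from the footnote in Section~\ref{seq:kelly}) into the definitions of $\tilde u$ and $\tilde d$, reduce each of the two ratios in Eq.~(\ref{eq: objective in lemma}) to a single closed-form expression, check that the two expressions coincide, and read off positivity from the sign constraints $d<R<u$ and $dS_0<K_0<uS_0$. The key observation that drives everything is that both $\tilde u$ and $\tilde d$ are built from $1/P_0$, and the price formula gives the single identity $\frac{1}{P_0} = \frac{R(u-d)}{(u-R)(K_0-dS_0)}$, which I would plug in everywhere.

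First I would handle the left-hand ratio. Since $\tilde u = (u-R)\frac{S_0}{P_0}$, substituting the price yields $\tilde u = \frac{RS_0(u-d)}{K_0-dS_0}$, and combining over a common denominator gives $R-\tilde u = \frac{R(K_0-uS_0)}{K_0-dS_0}$. Dividing by $R-u$ and cancelling the two minus signs produces $\frac{R-\tilde u}{R-u} = \frac{R(uS_0-K_0)}{(u-R)(K_0-dS_0)}$.

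Next the right-hand ratio, which is where the real work lies. Here $\tilde d = \frac{K_0-RS_0}{P_0}$, so writing $R-\tilde d$ over the common denominator $(u-R)(K_0-dS_0)$ gives the numerator $R\big[(u-R)(K_0-dS_0)-(K_0-RS_0)(u-d)\big]$. The main obstacle is to expand this polynomial and recognize that it factors as $R(R-d)(uS_0-K_0)$; after this, dividing by $R-d$ cancels the factor $(R-d)$ and leaves exactly $\frac{R(uS_0-K_0)}{(u-R)(K_0-dS_0)}$, matching the left-hand side. This cancellation is precisely what makes the two ratios equal, so the factorization step is the crux — everything else is direct substitution.

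Finally, for the strict inequality I would note that under $d<R<u$ and $dS_0<K_0<uS_0$ each factor $R$, $uS_0-K_0$, $u-R$, and $K_0-dS_0$ is strictly positive, hence the common value $\frac{R(uS_0-K_0)}{(u-R)(K_0-dS_0)}$ is positive. This simultaneously establishes the equality and the positivity claimed in Eq.~(\ref{eq: objective in lemma}), completing the proof.
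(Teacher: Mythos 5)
Your proof is correct; I checked the crux step, the factorization $(u-R)(K_0-dS_0)-(u-d)(K_0-RS_0)=(R-d)(uS_0-K_0)$, and it holds, as does the rest of the algebra and the sign argument. The route is the same in spirit as the paper's, since both hinge on substituting the arbitrage-free price $P_0=\frac{1}{R}\frac{u-R}{u-d}(K_0-dS_0)$, but the execution differs in a meaningful way. The paper never computes the common value of the two ratios: it establishes the equality by rewriting it as $\frac{R-\tilde{u}}{R-\tilde{d}}=\frac{R-u}{R-d}$, cross-multiplying, and reducing through a chain of equivalences to the trivially true identity $(u-R)(K_0-dS_0)=(R-u)(dS_0-K_0)$; positivity is then handled by a separate argument, showing $R-\tilde{u}<0$ (which reduces to $uS_0>K_0$) and dividing by $R-u<0$ from no-arbitrage. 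Your forward computation instead reduces both ratios to the single closed form $\frac{R(uS_0-K_0)}{(u-R)(K_0-dS_0)}$, from which equality and strict positivity drop out simultaneously. This costs a slightly heavier expansion (one must recognize the factorization), but it buys an explicit formula for the common value and makes transparent exactly which hypotheses, $d<R<u$ and $dS_0<K_0<uS_0$, drive the sign; the paper's version avoids the polynomial expansion at the price of splitting the proof into two separate arguments. Either argument is complete.
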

\begin{proof}
Since $K_0\in(dS_0,uS_0)$ the price of the put option is
\begin{equation}
    P_0 = \frac{1}{R}\frac{u-R}{u-d}(K_0-dS_0).
\end{equation}
We first show that the equality holds
    \begin{align}
        \frac{R-\Tilde{u}}{R-\Tilde{d}}   = \frac{R-u}{R-d}\Leftrightarrow
    \frac{P_0R-uS_0+RS_0}{P_0R-K_0+RS_0}  = \frac{R-u}{R-d}\Leftrightarrow\nonumber\\[15pt]
     P_0R(R-d)+S_0(R-u)(R-d)  = P_0R(R-u)+(S_0R-K_0)(R-u)\Leftrightarrow\label{eq:terms with P0}\\[10pt]
     (u-R)(K_0-dS_0)=(R-u)(dS_0-K_0)\label{eq:final equation for wealths equality}
    \end{align}
hence the thesis.

For the positivity, due to the equality above, it is enough to show
\begin{equation}\label{ineq: positivity}
    \frac{R-\tilde{u}}{R-u}>0.
\end{equation}
Since from the no-arbitrage condition it is $R-u<0$, to show the inequality (\ref{ineq: positivity}), it is needed to show
\begin{align}
    R-\tilde{u}<0\Leftrightarrow
    (u-R)\frac{S_0}{P_0}>R\Leftrightarrow
    (u-R)S_0>RP_0\Leftrightarrow\nonumber\\
    (u-R)S_0>\frac{u-R}{u-d}(K_0-dS_0)\Leftrightarrow
    S_0(u-d)>K_0-dS_0\Leftrightarrow\nonumber\\\label{eq: last inequality}
    uS_0>K_0.
\end{align}
Inequality (\ref{eq: last inequality}) holds by assumption, thus inequality (\ref{ineq: positivity}) is also true.
\end{proof}

\subsection{Proof of Proposition \ref{prop:1} of the paper}
In order to prove the almost sure equality, we need to show that the relative payoff for KS equals the relative payoff for KO strategy both when $X=u$ and when $X=d$. Since the calculations are similar, we focus only on the case $X=u$.
\begin{align}
    {\pi}_{g^*,c}(u) & = {\pi}_{0,f^*}(u)\Leftrightarrow\nonumber\\
    g^*(\Tilde{u}-R)+R+c(u-R) & = f^*(u-R)+R\Leftrightarrow\nonumber\\
 g^*(\Tilde{u}-R) & = (f^*-c)(u-R)\Leftrightarrow\nonumber\\[10pt]
 \frac{\frac{p(R-\Tilde{u})(R+c(d-R))+(1-p)(R-\Tilde{d})(R+c(u-R))}{(\Tilde{d}-R)(\Tilde{u}-R)}}{\frac{p(R-u)R+(1-p)(R-d)R-c(R-u)(R-d)}{(u-R)(d-R)}}& = \frac{u-R}{\Tilde{u}-R}\Leftrightarrow\nonumber\\[10pt]
 \frac{p(R-\Tilde{u})(R+c(d-R))+(1-p)(R-\Tilde{d})(R+c(u-R))}{p(R-u)R+(1-p)(R-d)R-c(R-u)(R-d)}& = \frac{R-\Tilde{d}}{R-d}\Leftrightarrow\nonumber\\[10pt]
 \frac{p\frac{R-\Tilde{u}}{R-\Tilde{d}}R-p\frac{R-\Tilde{u}}{R-\Tilde{d}}c(R-d)+(1-p)R-(1-p)c(R-u)}{p\frac{R-u}{R-d}R+(1-p)R-c(R-u)} & = 1
\end{align}
This last equality holds since the numerator equals the denominator of the fraction in the left hand side. Using the lemma \ref{first lemma}, the following part in the numerator vanishes
\begin{align*}
    -p\frac{R-\Tilde{u}}{R-\Tilde{d}}c(R-d)+pc(R-u) = 0\Leftrightarrow
    -p\frac{R-u}{R-d}c(R-d)+pc(R-u) = 0.
\end{align*}

\subsection{Proof of Theorem \ref{thrm: thorem convex strategy}}
\begin{lemma}[Log-Sum Inequality \cite{Cover_Thomas_Info_Theory} -- theorem 17.1.2]\label{lem: Log-Sum ineq}
For positive numbers $a_1,a_2,\cdots,a_n$ and $b_1,b_2,\cdots,b_n$,
\begin{equation*}
    \sum_{i=1}^n a_i\log\frac{a_i}{b_i}\geq \sum_{i=1}^n a_i\log\frac{\sum_{i=1}^n a_i}{\sum_{i=1}^n b_i}.
\end{equation*}
\end{lemma}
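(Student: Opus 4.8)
The plan is to reduce the inequality to a single application of Jensen's inequality for the strictly convex function $f(t) = t\log t$ on $(0,\infty)$. First I would record the only analytic input the argument needs: since $f''(t) = 1/t > 0$ for every $t>0$, the map $f$ is strictly convex on the positive reals. Everything else is algebraic bookkeeping.

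Next I would rewrite each summand on the left-hand side as a value of $f$. Setting $t_i := a_i/b_i > 0$, one has $a_i\log(a_i/b_i) = b_i\, t_i\log t_i = b_i\, f(t_i)$, so that $\sum_{i=1}^n a_i\log(a_i/b_i) = \sum_{i=1}^n b_i f(t_i)$. To convert the weights $b_i$ into a probability vector I would normalize by $B := \sum_{j=1}^n b_j$, defining $\alpha_i := b_i/B$; then $\alpha_i > 0$ and $\sum_{i=1}^n \alpha_i = 1$, so the $\alpha_i$ are legitimate Jensen weights.

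Then I would invoke Jensen's inequality in the form $\sum_{i=1}^n \alpha_i f(t_i) \geq f\big(\sum_{i=1}^n \alpha_i t_i\big)$. The barycenter simplifies cleanly: $\sum_{i=1}^n \alpha_i t_i = \tfrac{1}{B}\sum_{i=1}^n b_i(a_i/b_i) = A/B$, where $A := \sum_{j=1}^n a_j$. Substituting back, the left side of Jensen is $\tfrac{1}{B}\sum_{i=1}^n a_i\log(a_i/b_i)$ while the right side is $f(A/B) = (A/B)\log(A/B)$; multiplying the resulting inequality through by $B>0$ yields precisely $\sum_{i=1}^n a_i\log(a_i/b_i) \geq A\log(A/B)$, which is the claimed bound $\sum_{i=1}^n a_i\log\frac{a_i}{b_i} \geq \big(\sum_{i=1}^n a_i\big)\log\frac{\sum_{i=1}^n a_i}{\sum_{i=1}^n b_i}$.

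I do not anticipate a genuine obstacle, since this is the textbook log-sum inequality; the argument is essentially the verification that the pieces fit. The only point demanding minor care is positivity: the hypothesis that every $a_i$ and $b_i$ is positive guarantees that each $t_i$, each logarithm argument, and the totals $A,B$ are strictly positive, so that $A/B$ lies in the domain of $f$ and Jensen applies without boundary issues. If one wants the equality case as well, strict convexity of $f$ together with the equality condition in Jensen's inequality shows that equality holds if and only if the ratios $a_i/b_i$ are all equal.
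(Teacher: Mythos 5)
Your proof is correct: Jensen's inequality applied to the strictly convex function $f(t)=t\log t$ with weights $\alpha_i = b_i/\sum_j b_j$ gives exactly the stated bound, and your positivity remarks are the right hypotheses to check. The paper itself offers no proof of this lemma --- it cites it as Theorem 17.1.2 of Cover and Thomas --- and your argument is precisely the standard textbook proof found in that reference, so there is nothing further to compare.
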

\begin{lemma}[Convex-Log-Sum-Exp Inequality]\label{lem: Convex-Log-Sum-Exp}
Let $x_i\in\R$ for $i=1,\cdots,n$ and $\lambda_i\in(0,1)$ for $i=1,\cdots,n$ such that $\sum_{i=1}^n\lambda_i=1$, then  
\begin{equation*}
     \max_i x_i+ \log\min_i\lambda_i\leq \log\sum_{i=1}^n \lambda_ie^{x_i}\leq \max_i x_i.
\end{equation*}
\end{lemma}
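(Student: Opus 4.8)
The plan is to prove both inequalities by elementary monotonicity, treating the two sides separately; no deep machinery is needed. Throughout, set $M := \max_i x_i$ and $\mu := \min_i \lambda_i$, and note that $\mu > 0$ since each $\lambda_i \in (0,1)$, so every term $\lambda_i e^{x_i}$ is strictly positive and all the logarithms below are well-defined.

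For the upper bound, I would bound each exponential by its largest value. Since $x_i \leq M$ for every $i$ and $\exp$ is increasing, we have $e^{x_i} \leq e^{M}$. Multiplying by $\lambda_i \geq 0$ and summing, using $\sum_{i=1}^n \lambda_i = 1$, gives $\sum_{i=1}^n \lambda_i e^{x_i} \leq e^{M} \sum_{i=1}^n \lambda_i = e^{M}$. Applying the (increasing) logarithm yields $\log \sum_{i=1}^n \lambda_i e^{x_i} \leq M = \max_i x_i$, which is the right-hand inequality.

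For the lower bound, I would discard all but the dominant term. Let $j$ be an index with $x_j = M$. Since every summand is positive, the full sum is at least the single term indexed by $j$, so $\sum_{i=1}^n \lambda_i e^{x_i} \geq \lambda_j e^{x_j} = \lambda_j e^{M} \geq \mu\, e^{M}$, where the last step uses $\lambda_j \geq \min_i \lambda_i = \mu$. Taking logarithms and using $\log(\mu e^{M}) = \log \mu + M$ gives $\log \sum_{i=1}^n \lambda_i e^{x_i} \geq \max_i x_i + \log \min_i \lambda_i$, the left-hand inequality. Equivalently one could write $\sum_{i=1}^n \lambda_i e^{x_i} \geq \mu \sum_{i=1}^n e^{x_i} \geq \mu\, e^{M}$, but isolating a single maximizing summand is cleaner.

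Since this is a two-sided estimate obtained by replacing each $e^{x_i}$ with the uniform upper bound $e^{M}$ on one side and keeping only the maximizing summand on the other, there is essentially no obstacle: the only point that requires care is to record at the outset that $\mu > 0$, so that $\log \min_i \lambda_i$ is finite and the chain of logarithmic estimates is legitimate. Note in particular that the convexity of $\exp$ (or of the log-sum-exp map) is not actually needed for these bounds; they look sharper than what convexity would deliver but follow purely from positivity and monotonicity.
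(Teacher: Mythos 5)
Your proof is correct. The lower bound you give is essentially the paper's own argument in disguise: the paper factors out $e^{M}$ and observes that $\sum_i \lambda_i e^{x_i - M} \geq \min_i \lambda_i$ because one summand equals $\lambda_j$ and the rest are positive, which is exactly your ``keep only the maximizing term'' step. The upper bound, however, is where you genuinely diverge. The paper invokes the Log-Sum inequality of Cover and Thomas (their Lemma \ref{lem: Log-Sum ineq}), choosing $a_i = \lambda_i e^{x_i}$ and $b_i = \lambda_i$, which yields the intermediate estimate
\begin{equation*}
\log\sum_{i=1}^n \lambda_i e^{x_i} \;\leq\; \frac{\sum_{i=1}^n \lambda_i e^{x_i} x_i}{\sum_{i=1}^n \lambda_i e^{x_i}} \;\leq\; \max_i x_i,
\end{equation*}
i.e.\ the log-sum-exp is dominated by the softmax-weighted average of the $x_i$, which is then trivially at most $\max_i x_i$. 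Your route replaces all of this by the one-line bound $e^{x_i} \leq e^{M}$ together with $\sum_i \lambda_i = 1$. What your approach buys is self-containedness: the lemma no longer depends on an external information-theoretic result, and the proof is shorter and arguably more transparent. What the paper's approach buys is the sharper intermediate inequality (the weighted-average bound), which is strictly stronger than the final statement --- though, as you correctly observe, nothing in the application (the proof of Theorem \ref{thrm: thorem convex strategy}, where only the two-sided sandwich divided by $n$ is squeezed) uses that extra strength, so your elementary argument suffices for everything the lemma is used for.
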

\begin{proof}
For the lower bound we observe that
\begin{equation*}
    \sum_{i=1}^n\lambda_ie^{x_i} = e^{M}\sum_{i=1}^n\lambda_ie^{x_i-M}\Leftrightarrow\log\sum_{i=1}^n\lambda_ie^{x_i} = M+\log\sum_{i=1}^n\lambda_ie^{x_i-M}
\end{equation*}
where $M=\max_ix_i$. Then it is
\begin{equation*}
    \min_i\lambda_i\leq\sum_{i=1}^n\lambda_ie^{x_i-M}
\end{equation*}
since, from the definition of $M$, there is $i\in\{1,\cdots,n\}$ such that $e^{x_i-M}=1$ and for every $i\in\{1,\cdots,n\}$ it is $e^{x_i-M}>0$. Hence
\begin{equation*}
    \log\sum_{i=1}^n\lambda_ie^{x_i}\geq \max_i x_i+\log\min_i\lambda_i.
\end{equation*}
For the upper bound we set $a_i=\lambda_ie^{x_i}$ and $b_i=\lambda_i$ for $i=1,\cdots,n$ in the Log-Sum inequality (see lemma \ref{lem: Log-Sum ineq}),
    \begin{align*}
        \sum_{i=1}^n\lambda_i e^{x_i}\log\frac{\lambda_ie^{x_i}}{\lambda_i}&\geq \sum_{i=1}^n\lambda_ie^{x_i}\log\frac{\sum_{i=1}^n\lambda_ie^{x_i}}{\sum_{i=1}^n\lambda_i}\Leftrightarrow\\
        \sum_{i=1}^n\lambda_i e^{x_i}x_i&\geq \sum_{i=1}^n\lambda_ie^{x_i}\log\sum_{i=1}^n\lambda_ie^{x_i}\Leftrightarrow\\
       \log\sum_{i=1}^n\lambda_ie^{x_i}&\leq \frac{\sum_{i=1}^n\lambda_i e^{x_i}x_i}{\sum_{i=1}^n\lambda_ie^{x_i}}\leq\max_ix_i
    \end{align*}
    for the last inequality we observe that
    \begin{equation*}
        \frac{\sum_{i=1}^n \lambda_ie^{x_i}x_i}{\sum_{i=1}^n \lambda_ie^{x_i}}\leq \frac{\max_i x_i\sum_{i=1}^n \lambda_ie^{x_i}}{\sum_{i=1}^n \lambda_ie^{x_i}}.
    \end{equation*}
    As a final remark of this lemma, we observe that $\log\min_i\lambda_i<0$, since $\min_i\lambda_i<1$, thus for the upper and lower bound it is
    \begin{equation*}
        \max_i x_i+\log\min_i\lambda_i<\max_i x_i.
    \end{equation*}
\end{proof}
We proceed now to the proof of the theorem.
\begin{proof}
Without loss of generality, we assume that $W_0=1$, thus it is
\begin{align}
    \lim_{n\rightarrow\infty} \frac{1}{n}\log\big({W}_n^\text{KOc}\big) & = \lim_{n\rightarrow\infty} \log\big({W}_n^\text{KOc}\big)^{1/n}\nonumber\\
    & = \lim_{n\rightarrow\infty} \log\big(a{W}_n^{(1)}+(1-a){W}_n^{(2)}\big)^{1/n}\nonumber\\
    & = \lim_{n\rightarrow\infty} \log\big(ae^{\log{W}_n^{(1)}}+(1-a)e^{\log{W}_n^{(2)}}\big)^{1/n}\nonumber\\
    & = \lim_{n\rightarrow\infty} \log\big(ae^{\sum_{t=1}^n \log {\pi}_{g_1^*,c_1}(X_t)}+(1-a)e^{\sum_{t=1}^n \log {\pi}_{g_2^*,c_2}(X_t)}\big)^{1/n}.\label{eq: limit of V}
\end{align}
Let us define
\begin{equation*}
    Y_n = ae^{\sum_{t=1}^n\log{\pi}_{g_1^*,c_1}(X_t)}+(1-a)e^{\sum_{t=1}^n\log{\pi}_{g_2^*,c_2}(X_t)},
\end{equation*}
and by using the Convex-Log-Sum-Exp inequality (see Lemma \ref{lem: Convex-Log-Sum-Exp}), in which we divide by $n$, it is
\begin{align*}
    \frac{1}{n}\max\{\sum_{t=1}^n \log {\pi}_{g_1^*,c_1}(X_t),\sum_{t=1}^n \log {\pi}_{g_2^*,c_2}(X_t)\}+\frac{\log \min\{a,1-a\}}{n}\\\leq\log Y_n^{1/n}\leq\frac{1}{n}\max\{\sum_{t=1}^n \log {\pi}_{g_1^*,c_1}(X_t),\sum_{t=1}^n \log {\pi}_{g_2^*,c_2}(X_t)\}.
\end{align*}
Finally, from Eq.~(4) of the paper and using the {\it Squeeze theorem}, it follows
\begin{equation}
    \lim_{n\rightarrow\infty}\frac{1}{n}\log Y_n = \max\big\{\E[\log {\pi}_{g_1^*,c_1}(X)],\E[\log {\pi}_{g_1^*,c_1}(X)]\big\}~~\text{a.s.}.
    \end{equation}
\end{proof}
As a final remark, we note that in this proof the KOc strategy is asymptotically independent of the choice of $a$, a result also confirmed numerically in Figure \ref{fig: jensen a=0.9,0.1} of the main paper(see panels (d) and (h) compared to (a) and (e)). 

\subsection{Proof of Proposition \ref{prop: relationship between c and g}}
To prove the proposition, we first prove the following lemma
\begin{lemma}\label{second lemma}
    Let the functions $c_u,c_d:\R\rightarrow\R$, defined as
    \begin{align*}
        c_u(g) & = -\frac{g\tilde{u}+(1-g)R}{u-R}\\
        c_d(g) & = -\frac{g\tilde{d}+(1-g)R}{d-R}
    \end{align*}
and assume $K_0\in(dS_0,uS_0)$ then the following holds:
\begin{itemize}
    \item[1.] $c_u(g)<c_d(g)$, for any $g\in\R$.
    \item[2.] For any $g_1,g_2\in\R$ with $g_1<g_2$ holds
\begin{align*}
   c_u(g_1)&>c_u(g_2)~~~\text{and}\\
   c_d(g_1)&>c_d(g_2).
\end{align*}
\end{itemize}
\end{lemma}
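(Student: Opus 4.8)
The plan is to exploit the fact that both $c_u$ and $c_d$ are \emph{affine} functions of $g$, so each is completely determined by its slope and intercept, and then to read off both claims from these. First I would rewrite each function by collecting the coefficient of $g$:
\begin{equation*}
c_u(g) = -\frac{\tilde{u}-R}{u-R}\,g - \frac{R}{u-R}, \qquad c_d(g) = -\frac{\tilde{d}-R}{d-R}\,g - \frac{R}{d-R}.
\end{equation*}
The slopes are $-\frac{\tilde{u}-R}{u-R} = -\frac{R-\tilde{u}}{R-u}$ and $-\frac{\tilde{d}-R}{d-R} = -\frac{R-\tilde{d}}{R-d}$, so the whole argument hinges on Lemma \ref{first lemma}.

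For Part 2, I would invoke Lemma \ref{first lemma} to set $\rho := \frac{R-\tilde{u}}{R-u} = \frac{R-\tilde{d}}{R-d} > 0$. Both slopes then equal $-\rho < 0$, so each of $c_u$ and $c_d$ is strictly decreasing in $g$; in particular $g_1 < g_2$ forces $c_u(g_1) > c_u(g_2)$ and $c_d(g_1) > c_d(g_2)$, which is exactly the claim.

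For Part 1, the key observation is that Lemma \ref{first lemma} makes the two slopes \emph{equal}, so the difference $c_d(g) - c_u(g)$ is independent of $g$. It therefore suffices to evaluate it at a single convenient point, say $g = 0$:
\begin{equation*}
c_d(0) - c_u(0) = \frac{R}{u-R} - \frac{R}{d-R} = \frac{R(d-u)}{(u-R)(d-R)}.
\end{equation*}
Here the numerator $R(d-u)$ is negative (since $d<u$) and the denominator $(u-R)(d-R)$ is a product of a positive and a negative factor (by no-arbitrage $d<R<u$), hence also negative; the quotient is strictly positive, giving $c_d(g) > c_u(g)$ for every $g$.

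The steps are all short, and the only real subtlety — where I expect the main (though mild) obstacle to lie — is recognizing that the single identity in Lemma \ref{first lemma} does double duty: it simultaneously pins down the common sign of the two slopes (yielding monotonicity) and forces the two lines to be parallel (reducing the inequality in Part 1 to a one-point check). Without noticing the parallelism, one would instead expand $c_d(g)-c_u(g)$ over the common denominator $(u-R)(d-R)$ and verify via Lemma \ref{first lemma} that the $g$-dependent terms cancel, leaving the same constant $R(d-u)/[(u-R)(d-R)]$; this is a routine but slightly longer calculation.
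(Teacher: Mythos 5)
Your proposal is correct and follows essentially the same route as the paper: both use Lemma \ref{first lemma} to identify the two slopes as the common negative value $-\rho$ (giving Part 2 immediately), and both reduce Part 1 to comparing the intercepts $-\frac{R}{u-R}$ and $-\frac{R}{d-R}$ via the no-arbitrage condition $d<R<u$. Your phrasing of Part 1 through parallelism and a one-point evaluation is just a repackaging of the paper's direct chain of (in)equalities, so no substantive difference remains.
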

\begin{proof}
\begin{itemize}
    \item[1.] It is 
    \begin{align}
        c_u(g) = -g\frac{\Tilde{u}-R}{u-R}-\frac{R}{u-R}
        = -g\frac{\Tilde{d}-R}{d-R}-\frac{R}{u-R}
        < -g\frac{\Tilde{d}-R}{d-R}-\frac{R}{d-R} = c_d(g)
    \end{align}
    The second equality follows from lemma~\ref{first lemma}, while the inequality follows from the fact that
    \begin{equation}
        -\frac{R}{u-R}<0<-\frac{R}{d-R}.
    \end{equation}
    \item[2.] The functions $c_u(g),c_d(g)$ are both decreasing due to lemma \ref{first lemma}:
    \begin{align}
        c_u^\prime(g) & = -\frac{\tilde{u}-R}{u-R}<0\\
        c_d^\prime(g) & = -\frac{\tilde{d}-R}{d-R}<0
    \end{align}
    thus for any $g_1,g_2\in[0,1]$ with $g_1<g_2$ holds that $c_u(g_1)>c_u(g_2)$ and $c_d(g_1)>c_d(g_2)$.
\end{itemize}
\end{proof}
From lemma \ref{second lemma} holds $c_u(g)<c_d(g)$ for any $g\in\R$ and also that
    \begin{itemize}
        \item [1.] For any $g>g_0$ it is $c_d(g)<c_d(g_0)$ and $c_u(g)<c_u(g_0)$. Thus, by setting $I_c^{g} = (c_u(g),c_d(g))$ it is true that $I_c^{g}\subset (-\infty,c_d(g_0))$. 
        \item[2.] For any $g<g_0$ it is $c_d(g)>c_d(g_0)$ and $c_u(g)>c_u(g_0)$. Thus, by setting $I_c^{g} = (c_u(g),c_d(g))$ it is true that $I_c^{g}\subset (c_u(g_0),\infty)$. 
    \end{itemize}
    Moreover, for any $c\in I_c^{g}$ holds that ${\pi}_{g,c}(X)>0$ a.s.. To prove this last statement, we define
    \begin{align*}
        g_u(c) & = g\Tilde{u}+(1-g)R+c(u-R),\\
        g_d(c) & = g\Tilde{d}+(1-g)R+c(d-R)
    \end{align*}
    which corresponds to ${\pi}_{g,c}(u)$ and ${\pi}_{g,c}(d)$ respectively. Then $g_u$ is an increasing function while $g_d$ is a decreasing function due to the no-arbitrage condition $d<R.<u$.
    Thus for $g\in\R$, $g_u(c)>0$ for any $c>c_u(g)$ and $g_d(c)>0$ for any $c<c_d(g)$. Overall and due to the relation: $c_u(g)<c_d(g)$ (from lemma \ref{second lemma}), ${\pi}_{g,c}(X)>0$ a.s. for any $c\in I_c^{g} = (c_u(g),c_d(g))$.

\bibliographystyle{abbrvnat}
\begingroup
\bibliography{biblio2}
\endgroup

\section*{Email Addresses}
\noindent
\noindent\texttt{fabrizio.lillo@sns.it}\\
\noindent\texttt{piero.mazzarisi@unisi.it}\\
\noindent\texttt{ioannayvonni.tsaknaki@sns.it}\text{~(Corresponding author)}

\end{document}